\pgfplotsset{compat=1.4}
\newcommand{\AutoAdjust}[3]{\mathchoice{ \left #1 #2  \right #3}{#1 #2 #3}{#1 #2 #3}{#1 #2 #3} }
\newcommand{\Xcomment}[1]{{}}
\newcommand{\InBrackets}[1]{\AutoAdjust{[}{#1}{]}}% {\left[{#1}\right]}
\newcommand{\Ex}[2][]{\operatorname{\mathbb E}_{#1}\InBrackets{#2}}
\def\expect{\Ex}
\newcommand{\OPT}{*}
\DeclareMathOperator*{\argmin}{arg\,min}
\newcommand{\agent}{i}
\newcommand{\agentiter}{\agent}
\newcommand{\numagents}{n}
\newcommand{\action}{a}
\newcommand{\actions}{\mathbf{a}}
\newcommand{\actionagent}{a_{\agentiter}}
\newcommand{\actspace}{\mathbf A}
\newcommand{\actspaceagent}{A_{\agentiter}}
\newcommand{\actionalt}{a'}
\newcommand{\actionsitem}{\actions^\mitem}
\newcommand{\actiondist}{G}
\newcommand{\actionsdist}{\mathbf{\actiondist}}
\newcommand{\allocation}{x}
\newcommand{\altallocation}{y}
\newcommand{\alloc}{\mathbf \allocation}
\newcommand{\allocopt}{\alloc^{\OPT}}
\newcommand{\allocalt}{\mathbf \altallocation}
\newcommand{\allocagent}[1][\agentiter]{\allocation_{#1}}
\newcommand{\allocoptagent}{\allocation^{\OPT}_{\agentiter}}
\newcommand{\allocdev}{z}
\newcommand{\agentallocalt}{\altallocation}
\newcommand{\allocaltagent}{\agentallocalt_\agent}
\newcommand{\allocitem}{\alloc^\mitem}
\newcommand{\alloclevel}{\allocation^{\prime}}
\newcommand{\bidallocation}{\tilde\allocation}
\newcommand{\bidalloc}{\mathbf\bidallocation}
\newcommand{\bidallocagent}{\bidallocation_\agent}
\newcommand{\bidallocagentitem}{\bidallocagent^\mitem}
\newcommand{\bid}{a}
\newcommand{\bidagent}{\bid_{\agent}}
\newcommand{\deviation}{d}
\newcommand{\dev}{\deviation}
\newcommand{\mech}{M}
\newcommand{\mechanism}[1]{\textsc{#1}}
\newcommand{\mechanismres}[1]{\mechanism{#1}_{\reserves}}
\newcommand{\mechanismmon}[1]{\mechanism{#1}_{\reservesm}}
\newcommand{\FPA}{\mechanism{FPA}}
\newcommand{\FPAr}{\mechanismres{FPA}}
\newcommand{\FPAm}{\mechanismmon{FPA}}
\newcommand{\SPA}{\mechanism{SPA}}
\newcommand{\APA}{\mechanism{APA}}
\newcommand{\AP}{\mechanism{AP}}
\newcommand{\WPB}{\mechanism{WPB}}
\newcommand{\OPTmech}{\mechanism{Opt}}
\newcommand{\OPTm}{\OPTmech}
\newcommand{\pay}{p}
\newcommand{\payment}{\mathbf{\pay}}
\newcommand{\paymentagent}{\pay_{\agentiter}}
\newcommand{\paymentitem}{\payment^\mitem}
\newcommand{\bidpay}{\tilde \pay}
\newcommand{\bidpayment}{\mathbf{\bidpay}}
\newcommand{\bidpaymentagent}{\bidpay_{\agentiter}}
\newcommand{\bidpaymentagentitem}{\bidpay_{\agent}^{\mitem}}
\newcommand{\WEL}{\textsc{Welfare}}
\newcommand{\REV}{\textsc{Rev}}
\newcommand{\UTIL}{\textsc{Util}}
\newcommand{\rev}{\REV}
\newcommand{\revpar}{\mu}
\newcommand{\strategy}{s}
\newcommand{\strat}{\mathbf{\strategy}}
\newcommand{\stratagent}{\strategy_\agent}
\newcommand{\stratothers}{\strat_{-\agent}}
\newcommand{\threshold}{\tau} %this is threshold
\newcommand{\thresholdagent}{\threshold_{\agentiter}}
\newcommand{\thresholdfull}{\tau}
\newcommand{\expthreshold}{T}
\newcommand{\expthresholdagent}{\expthreshold_\agent}
\newcommand{\thresholdragent}[1]{\expthresholdagent^{#1}}
\newcommand{\threshexpectedallpay}[2]{\expthresholdagent^{\AP}}					%expected threshold between alloc prob #1 and #2
\newcommand{\threshexpectedwinnerpaysbid}[2]{\expthresholdagent^{\WPB}}					%expected threshold between alloc prob #1 and #2
\newcommand{\threshpt}[1]{\thresholdagent(#1)}							%pointwise threshold
\newcommand{\threshexpected}[2]{\expthresholdagent}					%expected threshold between alloc prob #1 and #2
\newcommand{\threshbound}{\underline{\expthreshold}}						%threshold lowerbound
\newcommand{\threshboundagent}{\threshbound_{\agentiter}}				%threshold lowerbound for an agent
\newcommand{\threshboundexpected}[2]{\threshboundagent}			%expected threshold between alloc prob #1 and #2
\newcommand{\bidfi}{\effbidagent}
\newcommand{\critbid}[1]{\thresholdfull_\agent(#1)}
\newcommand{\effbid}{t}
\newcommand{\effbidagent}{\effbid_{\agent}}
\newcommand{\equivbid}{\tilde{\beta}}
\newcommand{\equivbidagent}{\equivbid_\agent}
\newcommand{\equivbidact}[1]{\equivbidagent(#1)}
\newcommand{\effaction}{\alpha_\agent}
\newcommand{\ppu}{\effbidagent}
\newcommand{\util}{u}
\newcommand{\utilagent}{\util_{\agentiter}}
\newcommand{\bidutil}{\tilde u}
\newcommand{\bidutilagent}{\bidutil_{\agentiter}}
\newcommand{\val}{v}
\newcommand{\vals}{\mathbf \val}
\newcommand{\valothers}{\vals_{-\agentiter}}
\newcommand{\valagent}{v_{\agentiter}}
\newcommand{\valueset}{V}
\newcommand{\values}{\mathbf\valueset}
\newcommand{\valuesagent}{\valueset_\agent}
\newcommand{\valuecdf}{F_{\agent}}
\newcommand{\valuepdf}{f_{\agent}}
\newcommand{\dist}{F}
\newcommand{\valuecdfs}{\mathbf{\dist}}
\newcommand{\valuecdfsothers}{\valuecdfs_{-\agent}}
\newcommand{\vval}{\phi}
\newcommand{\vvalagent}{\vval_{\agentiter}}
\newcommand{\REVpos}{\REV^{+}}
\newcommand{\mitem}{j}
\newcommand{\simallocation}{\hat{\allocation}}
\newcommand{\simallocagent}{\simallocation_\agent}
\newcommand{\wants}{\mathcal S}
\newcommand{\wantsagent}[1][\agent]{\wants_{#1}}
\newcommand{\reserve}{r}
\newcommand{\reserves}{\mathbf{\reserve}}
\newcommand{\reserveagent}{\reserve_{\agent}}
\newcommand{\reservei}{\reserveagent}
\newcommand{\reserveagentm}{\reserveagent^{\OPT}}
\newcommand{\reservesm}{\mathbf{\reserve}^{\OPT}}
\newcommand{\NOTEC}[1]{}
\newcommand{\IFECELSE}[2]{#1}
\begin{document}

% Page heads
%\markboth{J Hartline et al.}{Price of Anarchy for Auction Revenue}

% Title portion
\title{Price of Anarchy for Auction Revenue}

\author{JASON HARTLINE
\affil{Northwestern University}
DARRELL HOY
\affil{Northwestern University}
SAM TAGGART
\affil{Northwestern University}}

\begin{abstract}
This paper develops tools for welfare and revenue analyses of
Bayes-Nash equilibria in asymmetric auctions with single-dimensional
agents.  We employ these tools to derive approximation results for
social welfare and revenue.  Our approach separates the smoothness
framework of, e.g., \citet{ST13}, into two distinct parts, isolating
the analysis common to any auction from the analysis specific to a
given auction.  The first part relates a bidder's contribution to
welfare in equilibrium to their contribution to welfare in the optimal
auction using the price the bidder faces for additional
allocation. Intuitively, either an agent's utility and hence
contribution to welfare is high, or the price she has to pay for
additional allocation is high relative to her value. We call this
condition \emph{value covering}; it holds in every Bayes-Nash
equilibrium of any auction. The second part, \emph{revenue covering},
uses the auction rules and feasibility constraints to relate the
revenue of the auction to the prices bidders face for additional
allocation.  Combining the two parts gives approximation results to
the optimal welfare, and, under the right conditions, the optimal
revenue.

As a centerpiece result, we analyze the single-item first-price
auction with individual monopoly reserves (the price that a monopolist
would post to sell to that agent alone; these reserves are generally
distinct for agents with values drawn from distinct distributions).
When each distribution satisfies the regularity condition
of \citet{M81}, the auction's revenue is at least a $2e/(e-1) \approx
3.16$ approximation to the revenue of the optimal auction. We also
give bounds for matroid auctions with winner-pays-bid or all-pay
semantics, the generalized winner-pays-bid position auction, and
winner-pays-bid single-minded combinatorial auctions.  Finally, we
give an extension theorem for simultaneous composition, i.e., when
multiple auctions are run simultaneously, with single-valued,
unit-demand agents.

\end{abstract}

\category{J.4}{Social and Behavioral Sciences}{Economics}

%\terms{Mechanism Design, Bayesian Mechanism Design, Approximation}

\keywords{Mechanism Design, Bayesian Mechanism Design, Approximation}

%\acmformat{Jason Hartline, Darrell Hoy, and Sam Taggart, 2014. Price of Anarchy for Auction Revenue.}

\begin{bottomstuff}
The authors are supported by NSF CAREER Award CCF-0846113 and NSF CCF-1101717. The full version of this paper can be found at \url{http://arxiv.org/abs/1404.5943}.
\end{bottomstuff}

\maketitle

\section{Introduction}

%
% benefit of price of anarchy analysis
%
The first step of a classical microeconomic analysis is to solve for
equilibrium.  Consequently, such analysis is restricted to settings
for which equilibrium is analytically tractable; these settings are
often disappointingly idealistic.  Worst-case equilibrium analysis
(often referred to as ``price of anarchy analysis'') provides an
alternative approach. Instead of solving for equilibrium, properties
of equilibrium can be quantified from consequences of best response.
These methods have been primarily employed for analyzing social
welfare.  While welfare is a fundamental economic objective, there are
many other properties of economic systems that are important to
understand.  This paper gives methods for deriving worst-case bounds
for auction revenue.

%
% beyond welfare.
%
Equilibrium requires that each agent's strategy be a best response to
the strategies of others.  A typical worst-case approximation analysis obtains
a bound on the social welfare (the sum of the revenue and all agent
utilities) from a lower bound on an agent's utility implied by best
response.  Notice that the agents themselves are each directly
attempting to optimize a term in the objective.  This property makes
social welfare special among objectives.  Can simple best-response
arguments be used to quantify and compare other objectives?  This
paper considers the objective of revenue, i.e., the sum of the agent
payments.  Notice that each agent's payment appears negatively in her
utility and, therefore, she prefers smaller payments; collectively the
agents prefer smaller revenue.

%
% mechanism design vs PoA 
%
The agenda of this paper parallels a recent trend in mechanism
design.  Mechanism design typically seeks to identify a mechanism with
optimal performance in equilibrium.  Optimal mechanisms tend to be
complicated and impractical; consequently, a recent branch of
mechanism design has sought to quantify the loss between simple
mechanisms and optimal mechanisms.  These simple (designed) mechanisms
have carefully constructed equilibria (typically, the truthtelling
equilibrium).  The restriction to truthtelling equilibrium, though
convenient in theory, is problematic in practice \citep[][]{AM06}.
In particular, this truthtelling equilibrium is specific to an ideal agent
model and tends to be especially non-robust to out-of-model phenomena.
The worst-case equilibrium analysis instead considers the
performance of simple mechanisms absent a carefully constructed
equilibrium.

%
% example: single-item auction
%
As an example, consider the single-item first-price auction, in which 
agents place
sealed bids, the auctioneer selects the highest bidder to win, and the
winner pays her bid.  The fundamental tradeoff faced by the agents in
selecting a bidding strategy is that higher bids correspond to a higher
probability of winning (which is beneficial) but higher payments (which is detrimental).
This first-price auction is a fundamental auction in practice
and it is the role of auction theory to understand its performance.
When the agents' values for the item are drawn independently and
identically, first-price equilibria are well-behaved: the 
symmetry of the setting enables the easy computation
of equilibrium \citep{Krishna09}, the equilibrium is unique
\citep{CH13,L06,MR03}, and the highest-valued agent always wins (hence,
social welfare is maximized).  When the agents' values are
non-identically distributed, analytically solving for equilibrium is
notoriously difficult.  For example, \citet{V61} posed the question
of solving for equilibrium with two agents with values drawn uniformly
from distinct intervals; this problem was finally resolved half a
century later by \citet{KZ12}.

%equilibrium being analytically tractable vs. reality of reaching equilibrium
%The intractibility of solving analytically for equilibrium is foremost a problem of theory. It does not rule out BNE as a practical concept: agents can reach equilibrium by playing learning strategies, numerically solving the differential equations  implied by equilibrium, etc. Free from the demands of theoretical analysis, agents may use these heuristic techniques, may focus on specific instances of their optimization problem, and may employ algorithmic techniques such as those developed by \citet{JL10}, rather than pursuing a general, analytical characterization.

Worst-case approximation analysis allows us to make general statements about 
behavior in equilibrium without requiring an explicit characterization. For example, 
\citet{ST13} show that the first-price auction's social welfare in
equilibrium is at least an $e/(e-1) \approx 1.58$ approximation to the optimal
social welfare, and moreover, this bound continues to hold if multiple
items are sold simultaneously by independent first-price auctions.  Importantly, this  analysis sidesteps the
intractability of solving for equilibrium and instead derives its
bounds from simple best-response arguments.

%Fundamentally, if a precondition for understanding the first-price
%auction in equilibrium is solving for its equilibrium, then this
%understanding is unlikely to be

%In terms of analysis methods, it is striking to compare the
%price-of-anarchy analyses \citep[e.g.,][]{ST13}, for auctions (as games
%of incomplete information) to the analyses of classical auction
%theory \citep[e.g.,][]{M81}.  The price of anarchy analyses, as
%described above, employ simple best-response arguments.  Analyses for
%complex equilibrium concepts such as Bayes-Nash equilibrium are given
%indirectly by extension theorems applied to analyses of simple
%equilibrium concepts such as pure Nash equilibrium.  On the other
%hand, a classical auction theoretic analyses begin from the
%characterization of Bayes-Nash equilibrium which gives (a) a
%monotonicity condition on the equilibrium allocation in terms of an
%agents preference (intuitively higher values should lead to no smaller
%allocation probabilities) and (b) a payment identity which expresses
%the equilibrium payment made in terms of the allocation.

\subsection{Approach}
Our analysis comprises two main arguments. The first, \emph{value
  covering}, encapsulates the welfare consequences of best
response. For any given agent, either their expected utility is high,
or they are unable to get allocated cheaply because their threshold
prices are high. Specifically, we show that for any agent in BNE, the
sum of their utilty and expected threshold bid is at least a
$\tfrac{e-1}{e}$-fraction of their value and, therefore, of their
contribution to the expected welfare of any mechanism. The second
argument, \emph{revenue covering}, captures the mechanism-specific
details that affect equilibrium welfare. An auction is revenue covered
if whenever allocation is difficult to achieve (i.e.\ threshold bids
are high), this difficulty translates into revenue for the
auctioneer. 
%In particular, any agent's expected threshold bid must be
%upper bounded by the revenue of the mechanism. 
Combining these two properties yields welfare results akin to those
proven in \citet{ST13}.

By decomposing welfare analysis into this modular framework, we are
able to extend the arguments to revenue analysis. The Bayes-Nash
equilibrium characterization of \citet{M81} reduces revenue
optimization to welfare maximization in the space of ``virtual
values.'' We adopt a similar approach: using the value covering
argument from our welfare analysis, we show that an agent's positive
virtual surplus plus their expected threshold bid approximates their
virtual value. If equilibrium revenue is not too badly reduced by
negative virtual surplus, this aforementioned \emph{virtual value
  covering} argument, combined with revenue covering, implies that the
first price auction's revenue approximates that of the optimal
mechanism. To control the negative virtual surplus from the
first-price auction, we use two common methods. First, we extend value
and revenue covering to mechanisms with reserves. This analysis shows
that adding individual monopoly reserves to the first-price auction
prevents the allocation of agents with negative virtual values, which implies a revenue bound. The second method we use begins with an
extension of a theorem of \citet{BK96}. They argue that with
sufficient competition in a second-price auction with i.i.d.\ agents,
the revenue loss from negative virtual surplus is small. We extend
this argument to asymmetric first-price and all-pay auctions. As a
corollary, we obtain a revenue bound for first-price auctions with
sufficient competition.

We present the above approach in
\Cref{sec:easy}, \Cref{sec:revenue},
and \Cref{sec:beyondsingleitem} in the context of auctions with
winner-pays-bid semantics. In \Cref{sec:beyondpyb}, we show how
to extend the above approach to mechanisms with other payment
semantics. We show that value covering is a property of BNE of
single-dimensional mechanisms in general. Consequently, all that
is required to show approximation results is revenue covering for the
mechanism being considered. As an example, we show that the all-pay
auction is revenue covered, yielding that in equilibrium the welfare
of the all-pay auction approximates the optimal welfare, and that
with sufficient competition, the equilibrium revenue approximates the
optimal revenue as well. Finally, we show that revenue covering is
robust to the simultaneous composition of mechanisms. Consequently,
our welfare (and some revenue results) extend to simultaneously run
mechansisms as well.

\subsection{Results}
%do we need this?
For single-item and matroid auctions (where the feasibility constraint
is given by a matroid set system), we give welfare and revenue results
with both winner-pays-bid and all-pay payment formats.  The
winner-pays-bid variants of these auctions (a) solicit bids, (b)
choose an outcome to optimize the sum of the reported bids of served
agents, and (c) charge the agents that are served their bids. The
all-pay variants of these auctions (a) solicit bids, (b) choose an
outcome to optimize the sum of the reported bids of served agents, and
(c) charge all agents their bids. Our analyses of these auctions are
compatible with reserve prices.

\emph{Welfare.} In first-price auctions and winner-pays-bid matroid
auctions, we derive a welfare approximation bound of $e/(e-1)$.  These
results also extend to the generalized first-price position
auction. For all-pay auctions in the above environments, we use the
same proof framework to show a welfare bound of $2e/(e-1)$. The
all-pay result is not the best-known bound, but we show how to modify
our methods to obtain the best-known welfare bound of $2$. While these
results do not improve on the best-known bounds, our proofs are
notable in that they can be extended to revenue.

\emph{Revenue.} For winner-pays-bid single-item and matroid auctions
with monopoly reserves and regular distributions, we show that the
equilibrium revenue is at most a factor of $2e/(e-1)$ from optimal. The
same bound holds in the generalized first-price position auction with
monopoly reserves.  If instead of reserves each bidder must compete
with at least one duplicate bidder, the approximation bound for revenue in
first-price auctions is at most $3e/(e-1)$; in all-pay auctions, at
most $6$.

\emph{Simultaneous Composition.} We also show via an extension theorem that the above welfare bounds (and revenue results for winner-pays-bid auctions with reserves) hold when auctions are run simultaneously if agents are \emph{unit-demand} and \emph{single-valued} across the outcomes of the auctions.

%
% the second part
%
%% We show that a simple property of mechanisms, \emph{revenue covering}
%% is a sufficient condition for approximately optimal welfare in all
%% BNE, and when combined with a method of limiting.

%% For the first-price auction, this gives a Price of Anarchy bound of
%% $\frac{e}{(e-1)}$. When combined with monopoly-reserves in a regular,
%% matroid environment, the revenue is always at least a
%% $\frac{2e}{(e-1)}$-approximation to the optimal auction. When at least
%% two agents from every type must compete over allocation, the revenue
%% is always at least a $\frac{3e}{(e-1)}$-approximation to the optimal
%% auction. These are the first known revenue bounds for the first-price
%% auction in the asymmetric setting.

%% We show that this behavior also holds for all-pay auctions; for
%% generalized first price position auctions, and when many auctions are
%% run simultaneously, assuming single-valued and unit demand bidders.

\subsection{Related Work}

%smoothness
Understanding welfare in games without solving for equilibrium is a
central theme in the smooth games framework of \citet{R09} and the
smooth mechanisms extension of \citet{ST13}. A core principle of smoothness is that restricting the arguments used in proving the smoothness property dictate how broadly the result extends. One way to view our work is that we limit our proofs a  way that allows for extensions to revenue approximations. 
 
Our framework refines the smoothness framework for Bayesian games in
three notable ways. First, we decompose smoothness into two
components, separating the the consequences of best-response (value
covering) from the specifics of a mechanism (revenue
covering). Second, because we focus on the optimization problem that
individual bidders are facing, we can attain results that only hold
for certain bidders --- for instance, bidders with values above their
reserve prices. Third, we only consider the Bayesian setting, which
allows us to employ the BNE characterization of \citet{M81} to
approximate revenue and to relate other formats of auctions to
winner-pays-bid formats via a framework of equivalent bids.

We note two subsequent works with strong connections to our
decomposition of smoothness into value covering and revenue
covering. First, \citet{DK15} show that revenue covering, which they
call ``permeability,'' is in fact a necessary condition (we show
sufficiency) for the equilibrium welfare of a mechanism to be proven
to be good via the smoothness framework.  Second, \citet{HNS15} show
how to derive empirical welfare bounds by measuring the degree to
which value and revenue covering hold, rather than inferring agents'
true values.

%simple vs optimal
A number of papers have derived revenue guarantees for
the welfare-optimal Vickrey-Clarke-Groves (VCG) mechanism in asymmetric settings. \citet{HR09}
show that VCG with monopoly reserves, a carefully chosen anonymous reserve, or duplicate bidders
achieves revenue that is a constant approximation to the revenue
optimal auction. \citet{DRY10} show that the single-sample
mechanism, essentially VCG using a single bid as a reserve, 
achieves approximately optimal
revenue in broader settings. \citet{RTY12} showed that in broader
environments, including matching settings, limiting the supply of
items in relation to the number of bidders gives a constant
approximation to the optimal auction.

%economics - kirkegaard
In the economics literature, \citet{K09} shows that
understanding the ratios of expected payoffs in equilibria of asymmetric auctions can lead to insights into equilibrium structure. \citet{K12a} considers properties of distributions on which the revenue of the first price auction exceeds that of the second price auction, and vice versa. \citet{L06} and \citet{MR03} establish
equilibrium uniqueness in the asymmetric setting with some assumptions
on the distributions of agents.
%
%
%\subsection{Layout of Paper}
%
%We warm up with the single-item first-price auction with monopoly reserves in Section~\ref{sec:easy}. We then develop the revenue covering 
%framework and show welfare approximation results in Section~\ref{sec:revenuecovering}. In Section~\ref{sec:revenue}, we give revenue approximation results for revenue covered mechanisms. In Section~\ref{sec:welfare}, we show mechanisms in which revenue covering holds. In Section~\ref{sec:simul} we show the extension theorem that revenue covering holds under
%simultaneous composition.

\section{Preliminaries}
\label{sec:prelim}

\paragraph{Bayesian Mechanisms.}
This paper considers mechanisms for $n$ single-dimensional agents with
linear utilities. Each agent has a private value for service,
$\valagent$, drawn independently from a distribution $\valuecdf$ over
valuation space $\valuesagent$. We write $\valuecdfs=\prod_\agent
\valuecdf$ and $\values=\prod_\agent \valuesagent$ to denote the joint
value distribution and space of valuation profiles, respectively,
where $\actspaceagent$ is the set of possible actions for $\agent$.  A
\emph{mechanism} consists of an action space $\actspace = \prod_\agent
\actspaceagent$, a bid allocation rule $\bidalloc$, and a bid payment
rule $\bidpayment$, mapping actions of agents to probabilistic
allocations and payments respectively. Each agent $\agent$ draws their
private value $\valagent$ from $\valuecdf$ and selects an action
according to some strategy $\stratagent: \valuesagent \rightarrow
\actspaceagent$.  We write $\strat=(\strategy_1,\ldots,\strategy_n)$
to denote the vector of agents' strategies. Given the actions
$\actions=(\action_1,\ldots,\action_n)$ selected by each agent, the
mechanism computes $\bidalloc(\actions)\in [0,1]^n$ and
$\bidpayment(\actions)$. Each agent's utility is
$\bidutilagent(\actions)=\valagent\bidallocagent(\actions)-\bidpaymentagent(\actions)$.

Typically mechanisms operate with constraints on permissible
allocations. Examples include single-item environments,
$\bidalloc(\actions)$ must satisfy $\sum_i \bidallocagent(\actions)
\leq 1$ for all actions $\actions \in \actspace$, or matroid
environments, where $\bidalloc(\actions)$ must be the membership
vector for an underlying matroid (see
Section~\ref{sec:greedy}). We will denote the set of feasible
allocations for a given environment by $\mathcal F$.

%Mechanisms typically operate with constraints on permissible allocations. A \emph{feasibility environment} specifies the set of feasible allocation vectors. Mechanisms for a feasibility environment must choose only allocations from the feasible set. The simplest example is a single-item auction, in which at most one person at a time can be served. %This paper assumes feasibility environments are \emph{downward-closed}: if $(\allocation_1,\ldots,\allocation_k,\ldots,\allocation_n)$ is feasible, so is $(\allocation_1,\ldots,0,\ldots,\allocation_n)$. We will often consider the special case of \emph{matroid} environments, in which the set of feasible allocations correspond to the independent sets of a matroid set system.

Given a strategy profile $\strat$, we often consider the expected allocation and payment an agent faces from choosing some action $\actionagent\in\actspaceagent$, with expectation taken with respect to other agents' values and actions induced by $\strat$. We  treat $\strat$ as implicit and write $\bidallocagent(\actionagent)=\expect[\valothers]{\bidallocagent(\actionagent,\stratothers(\valothers))}$, with $\bidpaymentagent(\actionagent)$ and $\bidutilagent(\actionagent)$ defined analogously. 
Given $\strat$ implicitly, we also consider values as inducing payments and allocations. We write $\alloc(\vals)=\bidalloc(\strat(\vals))$ and $\payment(\vals)=\bidpayment(\strat(\vals))$, respectively. Furthermore, we will denote agent $\agent$'s interim allocation probability and payment by $\allocagent(\valagent)=\bidallocagent(\stratagent(\valagent))$ and $\paymentagent(\valagent)=\bidpaymentagent(\stratagent(\valagent))$. We define $\util(\vals)$ and $\utilagent(\valagent)$ similarly. In general, we use a tilde to denote outcomes induced by actions, and omit the tilde when indicating outcomes induced by values. We refer to $\bidalloc$ as the \emph{bid allocation rule}, to distinguish it from $\alloc$, the \emph{allocation rule}. We adopt a similar convention with other notation.

\emph{Bayes-Nash Equilibrium.} A strategy profile $\strat$ is in {\em Bayes-Nash equilibrium} (BNE) if for all agents $\agent$, $\stratagent(\valagent)$ maximizes $\agent$'s interim utility, taken in expectation with respect to other agents' value distributions $\valuecdfsothers$ and their actions induced by $\strat$.
That is, for all $\agent$, $\valagent$, and alternative actions $\actionalt$:
$\expect[\valothers]{\bidutilagent(\strat(\vals))}
 \geq 
\expect[\valothers]{\bidutilagent(\actionalt,\stratothers(\valothers))}.$

%We will consider only mechanisms where agents can gain from participation, regardless of their value --- that is, mechanisms that are \emph{interim individually rational}. We will thus assume that in any auction every bidder has at least one withdraw action $\withdrawaction$ such that $\bidallocagent(\withdrawaction,\actionothers)=0$ and $\bidpaymentagent(\withdrawaction,\actionothers)=0$ for any $\actionothers$. We assume all mechanisms have at least one such action for each agent. In any BNE, each agent has the option to withdraw and must therefore get non-negative utility.

\citet{M81} characterizes the interim allocation
and payment rules that arise in BNE. These results are summarized in the following theorem.
\begin{theorem}[\citealp{M81}]
\label{thm:myerson}
For any mechanism with $\paymentagent(0)=0$ and any value distribution $\valuecdfs$,  BNE implies the following:
\begin{enumerate*}
\item (monotonicity) 
\label{thmpart:monotone}
The interim allocation rule $\allocagent(\valagent)$ for each agent is monotone
  non-decreasing in $\valagent$.
\item 
\label{thmpart:payment}
(payment identity) The interim payment rule satisfies $\paymentagent(\valagent) = \valagent \allocagent(\valagent) - \int_0^{\valagent}\allocagent(z) d z$.
\item
\label{thmpart:revequiv}
(revenue equivalence)
Mechanisms and equilibria which result in the same interim allocation rule $\alloc$ must therefore have the same interim payments as well.
\end{enumerate*}
\end{theorem}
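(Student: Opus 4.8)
The plan is to reduce the analysis of an arbitrary BNE to the familiar single-agent optimization underlying Myerson's characterization, using the equilibrium condition to recover a revelation-principle structure at the interim level. Fix an agent $\agent$, hold the other agents' strategies $\stratothers$ fixed, and recall that the interim allocation $\allocagent(\valagent) = \bidallocagent(\stratagent(\valagent))$ and interim payment $\paymentagent(\valagent) = \bidpaymentagent(\stratagent(\valagent))$ depend only on the action $\stratagent(\valagent)$, not on the true value. Consequently an agent with true value $\valagent$ who deviates to the equilibrium action of some value $z$ obtains interim utility $\valagent\allocagent(z) - \paymentagent(z)$, and the BNE condition says that ``reporting'' $z=\valagent$ is optimal. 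Write $\utilagent(\valagent) = \valagent\allocagent(\valagent) - \paymentagent(\valagent)$ for the resulting equilibrium interim utility.

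First I would establish monotonicity. The BNE condition gives, for all values $\valagent$ and $z$, $\utilagent(\valagent) \ge \valagent\allocagent(z) - \paymentagent(z) = \utilagent(z) + (\valagent - z)\allocagent(z)$, and symmetrically $\utilagent(z) \ge \utilagent(\valagent) + (z - \valagent)\allocagent(\valagent)$. Adding the two inequalities yields $(\valagent - z)(\allocagent(\valagent) - \allocagent(z)) \ge 0$, which is exactly monotonicity of $\allocagent$. The same two inequalities show that $\utilagent$ is the pointwise supremum over $z$ of the affine maps $\valagent \mapsto \valagent\allocagent(z) - \paymentagent(z)$, hence convex, and for $z < \valagent$ they sandwich the difference quotient: $\allocagent(z) \le \frac{\utilagent(\valagent) - \utilagent(z)}{\valagent - z} \le \allocagent(\valagent)$.

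Next I would derive the payment identity. Since $\allocagent$ is monotone and bounded in $[0,1]$ it has at most countably many discontinuities; at every continuity point the sandwich forces $\utilagent'(\valagent) = \allocagent(\valagent)$, and convexity makes $\utilagent$ locally Lipschitz, hence absolutely continuous, so $\utilagent(\valagent) = \utilagent(0) + \int_0^{\valagent}\allocagent(z)\,dz$. Interim individual rationality (the withdraw action guarantees $\utilagent \ge 0$) together with the normalization $\paymentagent(0)=0$ gives $\utilagent(0) = -\paymentagent(0) = 0$; substituting the definition of $\utilagent$ and rearranging yields $\paymentagent(\valagent) = \valagent\allocagent(\valagent) - \int_0^{\valagent}\allocagent(z)\,dz$. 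Revenue equivalence is then immediate: this formula expresses $\paymentagent$ as a functional of $\allocagent$ alone, so two mechanisms or equilibria inducing the same interim allocation rule induce the same interim payment rule, and hence (summing over agents and taking expectation over $\val$) the same expected revenue.

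The content here is entirely standard; the single step worth flagging is the first, where passing from an indirect mechanism (agents choose actions, not reports) to Myerson's single-agent analysis relies on the observation that in a BNE every agent's menu of available interim (allocation, payment) pairs includes the pair realized by every value's equilibrium action --- so the equilibrium strategy itself serves as the direct-revelation map. The remainder is measure-theoretic bookkeeping, namely absolute continuity of the convex function $\utilagent$ and identification of its derivative with $\allocagent$ almost everywhere, which is routine given that $\allocagent$ is monotone and bounded.
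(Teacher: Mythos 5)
Your proof is correct and is precisely the standard argument of \citet{M81} that the paper cites without reproving: pass to an interim direct mechanism via the equilibrium strategy, derive monotonicity and convexity of the interim utility from the pair of incentive constraints, and integrate the envelope condition to obtain the payment identity and revenue equivalence. The one point worth flagging is that the identity as stated needs the normalization $\paymentagent(0)=0$ (interim individual rationality alone only yields $\paymentagent(0)\le 0$), which you correctly make explicit even though the paper's statement leaves it implicit.
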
 

\paragraph{Mechanism Design Objectives.}
We consider the problem of maximizing two primary objectives in expectation at BNE: utilitarian welfare and revenue. The revenue of a mechanism $\mech$ is the total payment of all agents. Given a mechanism $\mech$ and a distribution over action profiles $\actionsdist$, the revenue of $\mech$ under $\actionsdist$ is given by $\rev(\mech,\actionsdist)=\Ex[\actions\sim\actionsdist]{\sum_\agent \bidpaymentagent(\actions)}$. Alternatively, a strategy profile $\strat$ and value distribution $\valuecdfs$ jointly determine a distribution over action profiles. We may therefore also write the revenue of a mechanism $\mech$ under $\strat$ and  $\valuecdfs$ as $\rev(\mech,\strat,\valuecdfs)=\Ex[\vals\sim\valuecdfs]{\sum_\agent \bidpaymentagent(\strat(\vals))}$. The welfare
of a mechanism $\mech$ under a strategy profile $\strat$ and value distribution $\valuecdfs$ is the total utility of all participants
including the auctioneer; denoted $\WEL(\mech,\strat,\valuecdfs)
= \rev(\mech,\strat,\valuecdfs) + \Ex[\vals\sim\valuecdfs]{\sum_\agent \bidutilagent(\strat(\vals))}=\Ex[\vals\sim\valuecdfs]{\sum_\agent\valagent\bidallocagent(\strat(\vals))}$ We will also refer to welfare as ``surplus.'' For both welfare and revenue, we will suppress $\actionsdist$, $\strat$, and $\valuecdfs$ when context makes the distributions of bids and values clear.

Our welfare benchmark is the outcome that always serves the highest valued feasible agents. That is, we seek to approximate $\WEL(\OPTmech)=\expect[\vals]{\max_{\allocopt\in \mathcal F} \sum_\agent \valagent\allocoptagent}$. This can be implemented via the Vickrey-Clarke-Groves (VCG) mechanism. We measure a mechanism $\mech$'s welfare performance by its worst-case approximation ratio, given by
\begin{equation*}
\max_{\valuecdfs \in \text{Indep};\, \valuecdfs,\strat\in \text{BNE}(\mech,\valuecdfs)} \frac{\WEL(\OPTmech,\valuecdfs)}{\WEL(\mech,\strat,\valuecdfs)},
\end{equation*}
where BNE$(\mech,\valuecdfs)$ is the set of BNE for $\mech$ under value distribution $\valuecdfs$. 

For revenue, we will make extensive use of the characterization of revenue in \citet{M81} that follows from Theorem~\ref{thm:myerson}:
\begin{lemma}
\label{lem:vvals}
In any BNE $\strat$ for distributions $\valuecdfs$, the ex ante expected payment of an agent is
  $\expect[\valagent]{\paymentagent(\valagent)} = \expect[\valagent]{\vvalagent(\valagent)\allocagent(\valagent)}$, where
  $\vvalagent(\valagent) = \valagent - \frac{1-\valuecdf(\valagent)}{\valuepdf(\valagent)}$ is the {\em virtual value} for value $\valagent$. It follows that $\rev(\mech) = E_\vals[\sum_\agent\paymentagent(\vals)]=\expect[\vals]{\sum_\agent\vvalagent(\valagent)\allocagent(\vals)}$.
\end{lemma}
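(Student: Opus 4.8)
The plan is to derive the virtual-value formula directly from the payment identity of Theorem~\ref{thm:myerson}, part~\ref{thmpart:payment}, by integrating against the value density and interchanging the order of integration. First I would take the interim payment rule $\paymentagent(\valagent) = \valagent\,\allocagent(\valagent) - \int_0^{\valagent}\allocagent(z)\,dz$, multiply by $\valuepdf(\valagent)$, and integrate over the support of $\valuecdf$:
\begin{equation*}
\expect[\valagent]{\paymentagent(\valagent)} = \int \valagent\,\allocagent(\valagent)\,\valuepdf(\valagent)\,d\valagent \;-\; \int \valuepdf(\valagent)\Bigl(\int_0^{\valagent}\allocagent(z)\,dz\Bigr)d\valagent.
\end{equation*}
The second term is an integral over the region $\{(\valagent,z): 0 \le z \le \valagent\}$; swapping the order of integration rewrites it as $\int \allocagent(z)\bigl(1-\valuecdf(z)\bigr)dz$. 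Substituting this back and pulling $\valuepdf(\valagent)$ out as the integrating measure gives
\begin{equation*}
\expect[\valagent]{\paymentagent(\valagent)} = \int \allocagent(\valagent)\Bigl(\valagent - \frac{1-\valuecdf(\valagent)}{\valuepdf(\valagent)}\Bigr)\valuepdf(\valagent)\,d\valagent = \expect[\valagent]{\vvalagent(\valagent)\,\allocagent(\valagent)},
\end{equation*}
which is the first claim. Equivalently, one can integrate $\int \paymentagent(\valagent)\valuepdf(\valagent)\,d\valagent$ by parts after substituting the payment identity; the content is the same.

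For the second claim I would unwind the interim quantities back to the ex post ones. By linearity, $\rev(\mech) = \expect[\val]{\sum_\agent \paymentagent(\val)} = \sum_\agent \expect[\val]{\paymentagent(\val)}$, and since the interim payment $\paymentagent(\valagent)$ is by definition the conditional expectation of $\bidpaymentagent(\strat(\val))$ given $\valagent$, the tower rule gives $\expect[\val]{\paymentagent(\val)} = \expect[\valagent]{\paymentagent(\valagent)}$. Applying the first claim to each agent, and then running the tower rule in reverse on the allocation term --- valid because $\vvalagent(\valagent)$ is a function of $\valagent$ only --- yields $\expect[\valagent]{\vvalagent(\valagent)\allocagent(\valagent)} = \expect[\val]{\vvalagent(\valagent)\allocagent(\val)}$. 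Summing over $\agent$ gives $\rev(\mech) = \expect[\val]{\sum_\agent \vvalagent(\valagent)\allocagent(\val)}$.

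The only step that needs care is the interchange of the order of integration (equivalently, checking that the boundary term in the integration-by-parts form vanishes): this is justified because $\allocagent$ is bounded and $\valuecdf$ has finite expectation, so $\valagent\bigl(1-\valuecdf(\valagent)\bigr)\to 0$ at the top of the support and the double integral is absolutely convergent. No properties of equilibrium beyond Theorem~\ref{thm:myerson} --- in particular, not monotonicity --- are used; the lemma is a pure consequence of the payment identity together with Fubini's theorem.
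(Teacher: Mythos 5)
Your derivation is correct and is exactly the canonical argument: the paper itself offers no proof of this lemma (it is cited as a direct consequence of Myerson's payment identity, part 2 of Theorem~\ref{thm:myerson}), and applying Fubini to $\int \valuepdf(\valagent)\int_0^{\valagent}\allocagent(z)\,dz\,d\valagent$ to obtain $\int\allocagent(z)(1-\valuecdf(z))\,dz$ is precisely how Myerson establishes it. Your handling of the second claim via the tower rule and your remark on absolute convergence are both fine.
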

Using this result, \citet{M81} derives the revenue-optimal mechanism
for any value distribution $\valuecdfs$. This mechanism is
parameterized by the value distribution $\valuecdfs$, and the
optimality is in expectation over $\vals\sim\valuecdfs$. We
specifically consider distributions with no point masses where
$\vvalagent(\valagent)$ is monotone in $\valagent$ for each
$\agent$. Such distributions are said to be \emph{regular}.  If each
agent has a regular distribution, then the revenue-optimal mechanism
selects the allocation which maximizes
$\sum_\agent\vvalagent(\valagent)\allocagent(\vals)$.  For revenue, we
will again measure the performance of a mechanism by its worst-case
approximation ratio:
\begin{equation*}
\max_{\valuecdfs\in\text{Reg};\,\strat\in \text{BNE}(\mech,\valuecdfs)} \frac{\rev(\OPTmech_\valuecdfs,\valuecdfs)}{\rev(\mech,\strat,\valuecdfs)},
\end{equation*} 
where $\text{Reg}$ is the set of regular distributions and $\OPTmech_\valuecdfs$ is the Bayesian revenue-optimal mechanism for value distribution $\valuecdfs$.

\section{Single-Item First Price Auction}\label{sec:easy}

We motivate our framework by analyzing the welfare of the first-price
auction, showing that it always approximates the welfare of the
welfare-optimal mechanism. This result has been known since the work
of \citet{ST13}, but our proof will lend itself to extension and
generalization. The rest of the paper will use this proof structure as
a template.

\begin{theorem}\label{thm:fpawelfare}
The welfare in any BNE of the first price auction is at least an $\frac{e}{e-1}$-approximation to the
welfare of the welfare-optimal mechanism. 
\end{theorem}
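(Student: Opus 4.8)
The plan is to prove the ``master inequality'' \eqref{eq:fpaoverall}, namely $\WEL(\FPAr) + \REV(\FPAr) \geq \tfrac{e-1}{e}\WEL(\OPTr)$, and then deduce the theorem since $\REV(\FPAr) \leq \WEL(\FPAr)$ (payments are at most values times allocations), which gives $2\WEL(\FPAr) \geq \tfrac{e-1}{e}\WEL(\OPTr)$, i.e.\ the $\tfrac{2e}{e-1}$ bound. So the real work is the master inequality, and I would organize it around the value-covering / revenue-covering split advertised in the introduction.

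\emph{Step 1 (per-agent value covering via equivalent bids).} Fix a BNE $\strat$ of $\FPAr$. For each agent $\agent$ and each value $\valagent \geq \reserveagent$, I want a lower bound of the form ``agent $\agent$'s utility plus the price she faces for reaching allocation level $\allocoptagent(\val)$ is at least $\tfrac{e-1}{e}\valagent\,\allocoptagent(\val)$,'' in expectation over $\valothers$. The mechanism here is first-price, so the ``price for additional allocation'' is transparent: to win with probability $x$ against the induced bid distribution of the others, agent $\agent$ must bid the quantile $\bidagent(x)$ (the inverse of the interim allocation curve), and best response means her equilibrium utility $\utilagent(\valagent)$ is at least $\valagent x - x\,\bidagent(x)$ for \emph{every} deviation level $x$. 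Choosing $x$ optimally and integrating the resulting differential inequality is exactly the calculation that produces the $e/(e-1)$ constant in the Syrgkanis--Tardos analysis: either $\utilagent(\valagent)$ is already a good fraction of $\valagent\,\allocoptagent(\val)$, or the bids $\bidagent$ grow fast enough that the expected price $\expect[\valothers]{\allocoptagent(\val)\cdot(\text{bid needed to hit that level})}$ covers the rest. The reserve enters only by restricting attention to $\valagent \geq \reserveagent$, so that $\OPTr$ may actually serve $\agent$; for $\valagent < \reserveagent$ the benchmark contribution is zero and there is nothing to prove.

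\emph{Step 2 (revenue covering for the single item).} Summing the per-agent inequalities over all agents and taking expectations over $\val$, the left side becomes $\sum_\agent \expect{\utilagent} + \expect{\sum_\agent (\text{price agent }\agent\text{ pays to reach }\allocoptagent)}$, and the right side is $\tfrac{e-1}{e}\WEL(\OPTr)$. The first sum is part of $\WEL(\FPAr)$. For the second sum I need: in the single-item first-price auction, the total of these ``prices for additional allocation,'' aggregated across agents, is bounded by $\REV(\FPAr)$ (possibly plus the leftover $\sum_\agent\expect{\utilagent}$, which is already accounted). Here the single-item feasibility constraint does the work: at any bid profile the price to be the winner is essentially the highest competing bid, and because only one agent can win, the sum over agents of ``your contribution to $\allocopt$ times the bid you'd need'' telescopes against the winner's actual payment, which is her bid. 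This is the place where the matroid/single-item structure — rather than anything about best response — is used, and it is where I expect the main technical friction: getting the aggregation to land exactly on $\REV(\FPAr)$ without slack, rather than on some larger quantity, requires handling the interim quantities carefully and is the step the paper presumably isolates as ``revenue covering.''

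\emph{Step 3 (assemble).} Combining Steps 1 and 2 gives $\WEL(\FPAr) + \REV(\FPAr) \geq \tfrac{e-1}{e}\WEL(\OPTr)$, which is \eqref{eq:fpaoverall}. Then using $\REV(\FPAr)\le\WEL(\FPAr)$ yields $\WEL(\FPAr) \geq \tfrac{e-1}{2e}\WEL(\OPTr)$, i.e.\ the price of anarchy for welfare against the same-reserves optimum is at most $\tfrac{2e}{e-1}$, as claimed. The only subtlety I would watch is that $\WEL(\OPTr)$ must be the welfare of the optimal mechanism \emph{restricted to not serving anyone below their reserve}, which is consistent with the benchmark in the theorem statement and with only summing the value-covering bound over agents with $\valagent \geq \reserveagent$. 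I expect Step 2 (revenue covering) to be the crux; Step 1 is a known-style best-response argument and Step 3 is bookkeeping.
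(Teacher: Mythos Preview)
Your three-step architecture matches the paper exactly: value covering per agent, revenue covering across agents, then assemble and use $\REV\leq\WEL$. The gap is in how you phrase Step~1, and it propagates to make Step~2 fail in the presence of reserves.

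In Step~1 you write the per-agent inequality as ``utility $\utilagent(\valagent)$ plus the price to reach $\allocoptagent$ is at least $\tfrac{e-1}{e}\valagent\allocoptagent$.'' That is the Syrgkanis--Tardos form: utility plus the \emph{full} cumulative threshold $\threshexpected{0}{\allocoptagent}$. With reserves, however, $\threshpt{z}=\max(\reserveagent,\bidcdfothers^{-1}(z))$, so the portion of the threshold for $z\leq\bidallocagent(\reserveagent)$ is exactly $\reserveagent$ and does not correspond to any competing bid. Hence in Step~2 you cannot bound $\sum_\agent\threshexpected{0}{\allocoptagent}$ by $\REV(\FPAr)$; the reserve-induced slab $\reserveagent\cdot\bidallocagent(\reserveagent)$ is simply not paid by anyone else. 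This is precisely the ``technical friction'' you anticipated, but it is not just friction---it is a genuine obstruction.

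The paper's fix is to replace utility by \emph{surplus} $\valagent\allocagent(\valagent)$ in value covering, and correspondingly to replace the full threshold by the \emph{additional} threshold $\threshexpected{\allocagent(\valagent)}{\allocoptagent}$ (Lemma~\ref{lem:covering}). The point is that a participatory agent with $\valagent>\reserveagent$ bids $\bidagent\geq\reserveagent$, so her own payment $\bidagent\allocagent(\valagent)$ already dominates the threshold on $[0,\allocagent(\valagent)]$, absorbing the reserve slab. What remains above $\allocagent(\valagent)$ is pure competing-bid threshold, and \emph{that} is what revenue covering (Lemma~\ref{lem:monoprevapx}) bounds by $\REV(\FPAr)$. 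The swap from utility to surplus is exactly what costs the factor of two relative to the no-reserve $e/(e-1)$ bound; the paper comments on this explicitly after the proof of Theorem~\ref{thm:fpawelfare}.
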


Our proof proceeds in two steps. First, we analyze the interim
optimization problem faced by every bidder. We quantify an intuitively
obvious tradeoff: either the bidder can get allocated cheaply,
attaining high utility, or allocation is expensive for that bidder to
obtain. Second, we note that allocation is only expensive to obtain if
the mechanism's revenue is high. This yields a tradeoff between
revenue (seller welfare) and utilities (buyer welfare):

\begin{equation} \label{eq:fpaoverall}
\sum\nolimits_\agent \UTIL_\agent(\FPA) + \REV(\FPA)  \geq \tfrac{e-1}{e} \WEL(\OPTm).
\end{equation}
This equation directly implies the theorem.

%TODO : move this somewhere else
%Note that equation~\eqref{eq:fpaoverall} resembles closely the
%inequality in the smooth games and mechanism frameworks \citep{ST13,
%  R09}. It differs primarily in that we are not defining a specific
%deviation, but rather deriving bounds explicitly from BNE.

\paragraph{A Bidder's Optimization Problem:} 
We now develop ideas needed to make this analysis formal. Consider the
problem faced by a bidder $\agent$ with value $\valagent$ in the first
price auction. Her expected utility for a possible bid $\dev$ is
$\bidutilagent(\dev)=(\valagent-\dev)\bidallocagent(\dev)$, where
$\bidallocagent(\dev)$ is the interim bid allocation rule she faces in
BNE. If we plot the bid allocation rule $\bidallocagent(\dev)$ for any
alternate bid $\dev$, then $\bidutilagent(\dev)$ is precisely the area
of the rectangle in the lower right; see
\Cref{fig:fpautilagent}. Let $\bidagent$ be her best response
bid given her value $\valagent$. It must be that $\bidagent$ maximizes
$\bidutilagent(\dev)$ and therefore the area of the rectangle under
$\bidallocagent(\dev)$.

	\begin{figure}[t]
	\centering
	\begin{tikzpicture}[xscale=3.5, yscale=3.5, domain=0:0.9, smooth]
		\draw  (1.12,1) -- (.91,0.7) .. controls ( 0.68,0.4) and ( 0.6,0.32) .. (0.16,0.1) --(0,0);
	    \node at (1.2,0.9) {$\bidallocagent(\dev)$};
	    %axes
	    \draw[-] (0,0) -- (0,1) node[left] {$1$};
		\draw[-] (0,0) -- (1.4,0) node[below] {Bid ($\dev$)};
		\node at (1, -0.09) {$\valagent$};
		\draw[pattern=north east lines, pattern color=lightgray] ( 0.5,0) rectangle ( 1,0.29);
		\node [fill=white!50] at (0.75,0.15){$\bidutilagent(\dev)$};
		\node at (-0.15,0.3) {$\bidallocagent(\dev)$};
		\draw[] (-0.02, 0.3) -- (0, 0.3);		
		\node at (0.5,-0.09) {$\dev$};
		\draw[] (1, 0) -- (1, -0.025);
		\draw[] (0.5, 0) -- (0.5, -0.025);
		\node at (0.7,1.1) {Bid Allocation Rule};
	\end{tikzpicture}
\caption{For any bid $\dev$ (and implicit value $\valagent$), the
  expected utility $\bidutilagent(\dev)$ is the area of a rectangle
  between $(\dev, \bidallocagent(\dev))$ on the bid allocation rule
  and $(\valagent, 0)$. The best-response bid $\bidagent$ is chosen to
  maximize this area.\label{fig:fpautilagent}}
	\end{figure}

When other bidders have realized values and submitted bids, bidder
$\agent$ wins only if her bid exceeds the bids of other
players. Consequently the price a bidder must pay to win is
$\threshpt{\valothers}= \max_{j\neq i} s_j (\val_j)$; we will refer to
it as her \emph{threshold bid}. In a Bayesian setting, a bidder reacts
not to a deterministic threshold, but rather views
$\threshpt{\valothers}$ as a random variable, with the bid allocation
rule $\bidallocagent(\cdot)$ as its CDF.

The expected threshold bid $\expthresholdagent=\Ex[\valothers]{\threshpt{\valothers}}$ of an agent is a rough measure of how hard it is for agent $\agent $ to receive allocation, and will be the focal quantity of our analysis. The expected value of a nonnegative random variable is the area above its CDF - in other words, agent $\agent$'s expected threshold bid is the area above $\bidallocagent(\cdot)$, which is $\int_{0}^{1} 1-\bidallocagent(z)\ dz$. It will be convenient to compute this quantity by integrating the inverse of $\bidallocagent(\cdot)$, which is given by $\bidfi(\allocation)= \min \{\bid\ |\ \bidallocagent(\bid)\geq
\allocation \}$. The inverse allocation function $\bidfi(\allocation)$ is the amount agent $i$ must bid to ensure allocation probability $\allocation$. In terms of $\bidfi(\cdot)$, the expected
threshold bid is $\expthresholdagent =
\int_{0}^{1} \bidfi(z)\ dz$.

%The geometry of agent $\agent$'s optimization problem is shown in Figure~\ref{fig:threshold}. 
%\footnote{This manner fo computing the expected threshold is more complicated than necessary for single-item auctions, but will extend cleanly to more general environments.}

%Let $\bidcdfothers(\bid)$ be the cumulative
%distribution function of the highest bids from other bidder. Then
%$\bidfi(\allocation)$ is either the reserve $\reservei$ or the bid
%required to beat the highest bid from other agents a $\allocation$
%fraction of the time, $\bidfi(\allocation) = \max(\reservei,
%\bidcdfothers^{-1}(\allocation))$.\footnote{If $\bidcdfothers$ is %not
%  invertible, then define $\bidcdfothers^{-1}(\allocation)$ to be %the
%  function $\bidcdfothers^{-1}(\allocation) = \inf\{
%  b\ |\ \bidallocagent(b)\geq \allocation\}$.}

%Let $\allocmin$ lowerbound feasible allocation above a bidders reserve. If there is no reserve, $\allocmin=0$; if there is a reserve $\reservei$, then $\allocmin=\bidallocagent(\reservei)$. In particular, we will focus on how expensive feasible allocation is, that is allocation starting with what a bidder can get with their reserve bid.%%

%TODO DH - change ui(bi) to ui(vi), change some bi to bi(vi)
%FPA intro pics: bid rectangle and upper bound
	\begin{figure}[t]
			\small
			\centering
			%\hspace*{\fill}
			\subfloat[][ As $\bidagent$ is a best-response to the actions of other agents, the indifference curve $\bidutilagent(\bidagent)/(\valagent-\dev)$ upper bounds $\bidallocagent(\dev)$. \label{fig:indifference}]{
	\begin{tikzpicture}[xscale=3.5, yscale=3.5, domain=0:0.9, smooth]
		%\draw[ pattern=north west lines, pattern color=lightgray] (0,0) -- (0,0.7) -- (.91,0.7) .. controls ( 0.68,0.4) and ( 0.6,0.32) .. (0,0);
		%\node [fill=white] at ( 0.3,0.45) {$\threshupto{\allocoptagent(\valagent)}$};
		\draw  (1.12,1) -- (.91,0.7) .. controls ( 0.68,0.4) and ( 0.6,0.32) .. (0.16,0.1) --(0,0);;
		\draw[line width=1.4pt] (0.85, 1) -- (0.82, 0.9) -- (.75,0.7) .. controls ( 0.55,0.19) and ( 0.35,0.20) .. (0,0.15);	 
		\node at (0.53,0.55) {\footnotesize $ \frac{\bidutilagent(\bidagent)}{\valagent-\dev}$};   
		\draw[pattern=north east lines, pattern color=lightgray] ( 0.5,0) rectangle ( 1,0.29);
	    \node at (1.2,0.9) {\footnotesize $\bidallocagent(\dev)$};
	    %axes
	    \node [fill=white!50] at (0.75,0.15){\footnotesize $\bidutilagent(\bidagent)$};
	    \draw[-] (0,0) -- (0,1) node[left] {$1$};
		\draw[-] (0,0) -- (1.4,0) node[below] {Bid ($\dev$)};
		\node at (1, -0.09) {\footnotesize $\valagent$};
		\node at (-0.16,0.3) {\footnotesize $\bidallocagent(\bidagent)$};		
		\draw[] (-0.02, 0.3) -- (0, 0.3);		
		%\node at (-0.12,0.7) {$\allocoptagent(\valagent)$};			
		\node at (0.5,-0.09) {\footnotesize $\bidagent$};
		\draw[] (1, 0) -- (1, -0.025);
		\draw[] (0.5, 0) -- (0.5, -0.025);
		\node at (-0.15, 0.15) {\footnotesize $\frac{ \bidutilagent(\bidagent)}{\valagent}$};
		\draw[] (-0.02, 0.155) -- (0, 0.155);
		%\node at (-.1, 1.08) {Allocation};
		\node at (1.1,1.1) {Bid Allocation Rule};
	\end{tikzpicture}} \IFECELSE{\hspace*{0.2cm}}{\hspace*{1cm}}
	\subfloat[][The expected threshold $\expthresholdagent$ is the area above the allocation rule. \label{fig:threshold} ]{
	\begin{tikzpicture}[xscale=3.5, yscale=3.5, domain=0:0.9, smooth]
		\draw  (1.12,1) -- (.91,0.7) .. controls ( 0.68,0.4) and ( 0.6,0.32) .. (0.16,0.1) --(0,0);;
	    \node at (1.2,0.9) {\footnotesize $\bidallocagent(\dev)$};
	    \draw[-] (0,0) -- (0,1) node[left] {$1$};
		\draw[-] (0,0) -- (1.4,0) node[below] {Bid ($\dev$)};
		\node at (1, -0.09) {\footnotesize $\valagent$};
		\draw[pattern=north east lines, pattern color=lightgray] ( 0.5,0) rectangle ( 1,0.29);
		\node [fill=white!50] at (0.75,0.15){\footnotesize $\bidutilagent(\bidagent)$};
		\draw[ pattern=north west lines, pattern color=lightgray] (0,0) -- (0,1) -- (1.12, 1) -- (.91,0.7) .. controls ( 0.68,0.4) and ( 0.6,0.32) .. (0.16,0.1) --(0,0);
		\node [fill=white] at ( 0.35,0.54) {\footnotesize $ \expthresholdagent$};
		\node at (-0.15,0.3) {\footnotesize $\bidallocagent(\bidagent)$};
		\draw[] (-0.02, 0.29) -- (0, 0.29);		
		\node at (0.5,-0.09) {\footnotesize $\bidagent$};
		\draw[] (1, 0) -- (1, -0.025);
		\draw[] (0.5, 0) -- (0.5, -0.025);
		\node at (0.7,1.1) {Bid Allocation Rule};
	\end{tikzpicture}
	}	
	\caption{ \label{fig:introfpa}
	}
\end{figure}

\paragraph{Relating Contributions to First-Price and Optimal Welfare:}  We will now approximate each bidder's
contribution to the optimal welfare individually, using the bidder's utility
in the first-price auction and a
fraction of the revenue in the first-price auction. In these terms, the steps to prove Theorem~\ref{thm:fpawelfare} are: 
\begin{enumerate*}
\item \emph{Value Covering}: Each bidder's utility and expected threshold together approximate her value. (Lemma \ref{lem:covering}) %$\valagent\allocagent(\valagent) + \threshexpected{\allocagent(\valagent)}{\allocoptagent(\valagent)} \geq  \frac{e-1}{e} \valagent\allocoptagent(\valagent)$  (Lemma \ref{lem:covering})
\item \emph{Revenue Covering}: The revenue of the first price auction upperbounds the expected thresholds of all agents. (Lemma \ref{lem:monoprevapx})
\end{enumerate*}

\noindent

\begin{lemma}[Value Covering]\label{lem:covering}
In any BNE of the first-price auction, for any bidder $i$ with value $\valagent$, 
%\allocoptagent(\valagent)\REV(\FPA)
\begin{equation}
\utilagent(\valagent)+\expthresholdagent \geq \tfrac{e-1}{e} \valagent
\label{eq:monvalcovering}.
\end{equation}
\end{lemma}

%Comparison of welfare and revenue approximation pictures.
	\begin{figure}[t]
			\small
			\centering
			%\hspace*{\fill}
			\subfloat[][Lemma \ref{lem:covering} shows the shaded areas cover a $(e-1)/e$ fraction of the dashed box, bidder $i$'s value $\valagent$ and hence maximum contribution to the optimal welfare. \label{fig:welfareboundgeo} ]{
	\begin{tikzpicture}[xscale=3.5, yscale=3.5, domain=0:0.9, smooth]
		\draw[ pattern=north west lines, pattern color=lightgray] (0,0) -- (0,1) -- (1.12, 1) -- (.91,0.7) .. controls ( 0.68,0.4) and ( 0.6,0.32) .. (0.16,0.1) --(0,0);
		\node [fill=white] at ( 0.39,0.56) {\footnotesize $\threshexpected{\allocagent(\valagent)}{\allocoptagent(\valagent)}$};
		\draw  (1.12,1) -- (.91,0.7) .. controls ( 0.68,0.4) and ( 0.6,0.32) .. (0.16,0.1) --(0,0);
	    \node at (1.2,0.9) {\footnotesize $\bidallocagent(\dev)$};
	    %axes
	    \draw[-] (0,0) -- (0,1) node[left] {$1$};
		\draw[-] (0,0) -- (1.4,0) node[below] {Bid ($\dev$)};
		\node at (1, -0.09) {\footnotesize $\valagent$};
		\draw[] (1, 0) -- (1, -0.025);
		\draw[pattern=north east lines, pattern color=lightgray] ( 0.5,0) rectangle ( 1,0.29);
		\node [fill=white!50] at (0.75,0.15){\footnotesize $\bidutilagent(\bidagent)$};
		%\draw[ pattern=north west lines, pattern color=lightgray] (0,0) -- (0,0.7) -- (.91,0.7) .. controls ( 0.68,0.4) and ( 0.6,0.32) .. (0,0);
		%\node [fill=white] at ( 0.3,0.45) {$\thresholdagent(\allocalt,\strat)$};
		\node at (-0.16,0.3) {\footnotesize $\allocagent(\valagent)$};	
		\draw[] (-0.02, 0.29) -- (0, 0.29);	
		%\node at (-0.12, 0) {$\allocmin$};			
		\node at (0.5,-0.09) {\footnotesize $\bidagent$};
		\draw[] (0.5, 0) -- (0.5, -0.025);
		\draw[line width=1.3pt, dashed] (0,0) rectangle (1,1);
		%\node at (-.1, 1.08) {Allocation};
		\node at (0.7,1.1) {Bid Allocation Rule};
	\end{tikzpicture}
	%REVENUE PICTURE - VIRTUAL VALUES
		}\IFECELSE{\hspace*{0.2cm}}{\hspace*{1cm}}
			\subfloat[][Lemma \ref{lem:vvalcoveringeasy} shows the shaded areas cover an $(e-1)/e$ fraction of $i$'s virtual value $\vvalagent(\valagent)$. \label{fig:revenueboundgeo}]{
	\begin{tikzpicture}[xscale=3.5, yscale=3.5, domain=0:0.9, smooth]
		\draw[ pattern=north west lines, pattern color=lightgray] (0,0) -- (0,1) -- (1.12, 1) -- (.91,0.7) .. controls ( 0.68,0.4) and ( 0.6,0.32) .. (0.16,0.1) --(0,0);
		\node [fill=white] at ( 0.39,0.56) {\footnotesize $\threshexpected{\allocagent(\valagent)}{\allocoptagent(\valagent)}$};
		\draw  (1.12,1) -- (.91,0.7) .. controls ( 0.68,0.4) and ( 0.6,0.32) .. (0.16,0.1) --(0.0,0);
	    \node at (1.2,0.9) {\footnotesize $\bidallocagent(\dev)$};
	    %axes
	    \draw[-] (0,0) -- (0,1) node[left] {$1$};
		\draw[-] (0,0) -- (1.4,0) node[below] {Bid ($\dev$)};
		\draw[] (1, 0) -- (1, -0.025);		
		\node at (1, -0.09) {\footnotesize $\valagent$};
		\draw[] (0.75, 0) -- (0.75, -0.025);		
		\node at (0.75, -0.09) {\footnotesize $\vvalagent(\valagent)$};
		\draw[pattern=north east lines, pattern color=lightgray] 
										( 0,0) rectangle ( 0.75, 0.29);
		\node [fill=white!50] at (0.35,0.15){\footnotesize $\vvalagent(\valagent)\allocagent(\valagent)$};
		\node at (-0.16,0.3) {\footnotesize $\allocagent(\valagent)$};	
		\draw[] (-0.02, 0.29) -- (0, 0.29);	
	
		\node at (0.5,-0.09) {\footnotesize $\bidagent$};
		\draw[] (0.5, 0) -- (0.5, -0.025);
		%\node at (-0.12, 0) {$\allocmin$};
		\draw[line width=1.5pt, dotted] (0,0) rectangle (0.75,1);
		%\node at (-.1, 1.08) {Allocation};
		\node at (0.9,1.1) {Bid Allocation Rule};
	\end{tikzpicture}
	}	
	\caption{ \label{fig:covering}}
\end{figure}
\noindent

\begin{proof} 
We will prove value covering in two steps: first, by developing a lower bound $\threshbound$ on the expected threshold $\expthresholdagent$; second, by optimizing to get the worst such bound. The first-price bid deviation approach of \citet{ST13} gives the same result.

%\emph{Lowerbounding $\expthreshold$.} 
%TODO ST: make not bad!!! 
 \textbf{Lowerbounding $\expthresholdagent$.} In best responding, bidder $\agent$ chooses an action which maximizes her utility. Hence for an agent with value $\valagent$ and any bid $\dev$, her equilibrium utility $\utilagent$ satisfies $\utilagent \geq (\valagent-\dev) \bidallocagent(\dev)$. We may write the righthand side in terms of the inverse allocation function $\bidfi(\cdot)$ to get $\utilagent \geq (\valagent-\bidfi(\allocation)) \allocation$. Rearranging this inequality yields a bound on $\bidfi(\allocation)$ for any $\allocation$: 
 \begin{equation}
 \bidfi(\allocation)\geq \valagent-\tfrac{\utilagent}{\allocation}.
 \notag
 \end{equation}
 
 Note that this bound is meaningful as long as the righthand side is nonnegative - that is, as long as $\allocation\geq \utilagent/\valagent$. Otherwise, note that $\bidfi(\allocation)\geq 0$. To derive a lower bound on the expected threshold $\expthresholdagent$, we use the definition of $\expthresholdagent$ and integrate $\bidfi(\allocation)$ over all relevant values of $\allocation$: from $\utilagent/\valagent$ to 1. Hence:
\begin{equation}
\expthresholdagent\geq\int_{\tfrac{\utilagent}{\valagent}}^1 \valagent-\tfrac{\utilagent}{\allocation}\,d\allocation=\threshboundagent.
\notag
\end{equation}

%In terms of $\ppu(\allocdev)$, we have $\utilagent(\bidagent) \geq (\valagent-\bidfi(z)) z$ and $\valagent-\frac{\utilagent}{z} \leq  \bidfi(z)$. %This bound is meaningful as long as $\valagent-\frac{\utilagent(\valagent)}{\allocdev}\geq 0$, or alternatively $\allocdev\geq\utilagent(\valagent)/\valagent$. It follows that

 \textbf{Worst-case $\threshboundagent$.}
%\emph{Optimizing $\threshboundagent$.} 
Evaluating the integral for $\threshboundagent$ gives $\threshboundagent =\valagent-\utilagent(1-\ln\frac{\utilagent}{\valagent})$, hence $\utilagent+\threshboundagent =\valagent+\utilagent\ln\frac{\utilagent}{\valagent}$. Holding $\valagent$ fixed and minimizing with respect to $\utilagent$ yields a minimum at $\utilagent=\valagent/e$, hence
$\utilagent + \threshboundagent \geq \tfrac{e-1}{e}\valagent$. The lemma follows.
\end{proof}
Note that this analysis depended only on the fact that bidder $\agent$ was best responding to a bid distribution. We later will derive a nearly identical condition for every single-dimensional mechanism in BNE using this same logic.

We now show that expected thresholds lowerbound revenue, which will combine with value covering to produce the welfare result. While value covering depended
only on equilibrium bidding behavior,
revenue covering will only depend on the form of the first price
auction, and will thus hold for arbitrary (not necessarily BNE)
bidding strategies.

%TODO - add qualification with reserves.
\begin{lemma}[Revenue Covering\footnote{To prove our welfare result, it suffices to show $\REV(\FPA)\geq \expthresholdagent$ for every agent $\agent$. We use the more complicated statement to parallel the statement more general feasibility environments.}]\label{lem:monoprevapx}
Fix an arbitrary bid distribution $\actionsdist$ for the first price auction. For any feasible allocation $\allocalt$, 
\begin{equation}
\label{eq:FPARC}
\REV(\FPA,\actionsdist) \geq  \sum_{\agent}\expthresholdagent\allocaltagent.
\end{equation}
\end{lemma}

%TODO - fix this proof!
\begin{proof} 
The revenue of a first price auction is the expected highest bid,
and $\expthresholdagent$ is the expected highest bid from all agents except $i$. Hence for any agent $\agent$, $\REV(\FPA,\actionsdist)\geq \expthresholdagent$. Since single-item feasible allocations sum to at most 1, equation (\ref{eq:FPARC}) follows.
\end{proof}

%\begin{proof}
%The expected threshold bid up to an allocation $\threshupto{\allocoptagent(\valagent)}$ can be viewed as the smallest $\allocoptagent(\valagent)$ fraction of highest bids from other bidders. In particular, this means $\threshupto{\allocoptagent(\valagent)}$ is smaller than any other way of distributing the bids with probability $\allocoptagent(\valagent)$. One such method of distribution is to give each bidder the winning payment when she would have won in the optimal allocation. Hence, 
%\begin{equation*}
%\threshupto{\allocoptagent(\valagent)} \leq \expect[\valothers]{ \allocoptagent(\vals) \left(\sum\nolimits_j \allocation_j(\vals) \bid_j(\vals)\right)}.
%\end{equation*}
%
%
%Summing and taking expectation over all agents and values gives
%
%\begin{align}
%\sum\nolimits_i \expect[\valagent]{\threshupto{\allocoptagent(\valagent)}} &\leq  \expect[\vals]{ \sum\nolimits_i \allocoptagent(\vals) \left(\sum\nolimits_j \allocation_j(\vals) \bid_j(\vals)\right)}\nonumber\\
%&=\expect[\vals]{\sum\nolimits_j \allocation_j(\vals) \bid_j(\vals)}\nonumber\\
%&=\REV(\FPA)
%\end{align}
%\
%\end{proof}
%Rev(FPA) > area above the allocation curve
%By the monotonicity of $\bid(z)$, $\int_0^{\allocoptagent(\valagent)} \bid(z) \ dz \leq \allocoptagent(\valagent) \REV(\FPA)$. Note that  $\int_0^{\allocoptagent(\valagent)} \bid(z) \ dz$ is precisely the area above the bid allocation rule up to $\allocoptagent(\valagent)$ in Figure~\ref{fig:fpautilagent}.

We now combine value and revenue covering to approximate the optimal welfare. 

\begin{proof}[\NOTEC{Proof }of \Cref{thm:fpawelfare}]
We begin by summing the value covering in equality for each agent in
an arbitrary value profile $\vals$:
\begin{equation}
\sum\nolimits_{\agent}\utilagent(\valagent)+\sum\nolimits_{\agent}\expthresholdagent\geq\frac{e-1}{e}\sum\nolimits_{\agent}\valagent.
\notag
\end{equation}
Let $\allocopt(\vals)$ be the allocation of the optimal mechanism for $\vals$. Since $\allocoptagent(\vals)\leq 1$ for each agent $\agent$, and since $\utilagent(\valagent)\geq 0$, we obtain:
\begin{equation}
\sum\nolimits_{\agent}\utilagent(\valagent)+\sum\nolimits_{\agent}\expthresholdagent\allocoptagent(\vals)\geq\frac{e-1}{e}\sum\nolimits_{\agent}\valagent\allocoptagent(\vals).
\notag
\end{equation}

Applying revenue covering and taking expectation with respect to $\vals$ shows that $\UTIL(\FPA) +
\REV(\FPA) \geq \frac{e-1}{e} \WEL(\OPTm)$. Since welfare is the sum of agent utilities and revenue, the welfare of the first price auction is an $e/(e-1)$ approximation to
$\OPTm$.
\end{proof}

\subsection{Welfare Lower Bounds}
The approximation results we have given in this section for the
single-item first-price auction are not known to be tight. In
Appendix~\ref{sec:app-example}, we present the best-known lower bound,
with an approximation factor of $1.15$. Note the large gap between
this lower bound and the upper bound of $\frac{e}{e-1} \approx 1.58$
from Theorem~\ref{thm:fpawelfare} and \citet{ST13}.

Beyond a single auction, \citet{CKST13} have shown that the
$\frac{e}{e-1}$ bound is tight for the simultaneous composition of
item auctions when bidders have submodular valuations.

%For revenue, the approximation ratio can be at least as bad as $2$, using the same lower bound \citet{HR09} show for VCG with monopoly reserves. The example has two bidders, one with deterministic value $1$, the other with value drawn according to the equal revenue distribution with support over $[1, H]$ for some large $H$. With a light perturbation of the distribution the monopoly price for the second bidder is $1$. Assuming ties go to bidder $2$, an equilibrium exists where both players bid $1$, giving revenue of $1$. The optimal auction however can set a reserve of $H$ for the second bidder and sell to the first bidder at price $1$ if the reserve is not met, achieving a revenue of $2$ as $H$ grows.

\section{Framework}
\label{sec:revenuecovering}

In equilibria of the single-item first-price auction, we observed that agents with low expected utility had high expected threshold bids. Because high thresholds were connected to high payments, we could conclude that the first price auction is both approximately welfare- and revenue-optimal. The goal for this section is to build up a framework for making this same argument for mechanisms with different payment semantics, such as all-pay auctions. In particular, we seek to prove results about behavior in Bayes-Nash equilibrium while \emph{ignoring} the particular payment semantics of each auction. We begin by defining \emph{equivalent bids}, which allow us to reduce the optimization problem a bidder faces in any auction to the problem faced in a first-price auction. This will allow us to reduce much of the analysis of general auctions for single-dimensional agents to the single-item first-price auction analysis of Section~\ref{sec:easy}.

\subsection{Equivalent Bids}
Utility-maximizing agents must balance two goals: getting allocated frequently, and getting allocated cheaply. In a first-price auction, agents bid to explicitly specify the tradeoff they are willing to make: their bid is the price they pay per unit of allocation. In general mechanisms, for any agent $\agent$ and any action $\actionagent$, define the \emph{equivalent bid} for an action $\actionagent$ to be $\equivbidact{\actionagent}=\bidpaymentagent(\actionagent)/\bidallocagent(\actionagent)$; this can be thought of as the price per unit of allocation for that action. For first-price auctions, this is exactly the bid. For mechanisms with different payment semantics, $\equivbidact{\actionagent}$ can still be thought of as an equivalent first-price bid for action $\actionagent$.

\paragraph{Equivalent Threshold Bid.}
In proving Theorem~\ref{thm:fpawelfare}, we noted that $\bid$ is the minimum payment necessary to get the allocation probability $\bidallocagent(\bid)$. We used this property to bound the distribution of other agents' bids. For auctions where this relationship is less clear, we think of agents partitioning the actions in their choice set by interim allocation probability, then for each probability consider only the cheapest such action in terms of price per unit of allocation. For each allocation probability $\allocdev$, let $\effaction(\allocdev)$ be that cheapest action and let the \emph{equivalent threshold bid} $\ppu(\allocdev) = \equivbidact{\effaction(\allocdev)}$ be the equivalent bid of the cheapest action. Formally, $\ppu(\allocdev)= \min_{\actionagent:\bidallocagent(\actionagent)\geq \allocdev} \equivbidact{\actionagent}$, with $\effaction(\allocdev)$ the $\arg \min$. Note that $\ppu(\allocdev)$ depends on $\strat$; for notational convenience, we suppress the strategy profile as an argument. 

\paragraph{Cumulative Equivalent Threshold Bid.}
We can now use $\ppu(\allocdev)$ to track the expense an agent faces from increasing their allocation. Specifically, assume an agent is playing some action $\actionagent$ and seeks to increase their allocation probability to $\alloclevel$. The barrier to $\agent$ doing so is the set of equivalent threshold bids in $[\bidallocagent(\actionagent),\alloclevel]$. We can use this notion to measure $\agent$'s expense for additional allocation.  Define the \emph{cumulative equivalent threshold bid} as
$\threshexpected{\bidallocagent(\actionagent)}{\alloclevel}=\int_{\bidallocagent(\actionagent)}^{\alloclevel}\ppu(\allocdev)\,d\allocdev.$ If $\alloclevel\leq\bidallocagent(\actionagent)$, then define $\threshexpected{\bidallocagent(\actionagent)}{\alloclevel}=0$. This quantity will function identically to its counterpart in Section~\ref{sec:easy}, trading off against $\agent$'s surplus as in Lemma~\ref{lem:covering}, and translating into revenue as in Lemma~\ref{lem:monoprevapx}. Note that because $\ppu(\allocdev)$ is nondecreasing in $\allocdev$, $\threshexpected{\bidallocagent(\actionagent)}{\alloclevel}$ is convex in $\alloclevel$. 

\subsection{Covering Conditions and the Price of Anarchy}
\label{sec:frameworkpoa}
We now use equivalent bids and thresholds in place of first-price bids and thresholds to develop general analogues of the value and revenue covering conditions of Section~\ref{sec:easy}. 

\begin{lemma}[Value Covering]
\label{lem:vc}
Consider a mechanism $\mech$ in BNE with induced allocation and payment rules $(\alloc,\payment)$, and an agent $\agent$ with value $\valagent$. For any $\alloclevel\in[0,1]$,
\begin{equation}
\label{eq:vcl}
\valagent\allocagent(\valagent)+\threshexpected{\allocagent(\valagent)}{\alloclevel}\geq\tfrac{e-1}{e}\valagent\alloclevel.
\end{equation}
\end{lemma}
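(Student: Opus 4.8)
The plan is to reduce Lemma~\ref{lem:vc} to the single-item first-price value covering argument (Lemma~\ref{lem:covering}), by showing that the equivalent bids and the cumulative equivalent threshold bid play exactly the roles that bids and the cumulative additional threshold played in Section~\ref{sec:easy}. First I would observe that, because $\mech$ is in BNE, agent $\agent$ with value $\valagent$ has some optimal action, and hence some optimal equivalent bid $\bidagent=\equivbidact{\actionagent}$ where $\actionagent$ is her equilibrium action (or, more precisely, the cheapest action achieving her equilibrium allocation probability). Her equilibrium utility is $\utilagent(\valagent)=\valagent\allocagent(\valagent)-\paymentagent(\valagent)=(\valagent-\bidagent)\allocagent(\valagent)$, exactly as in the first-price case, since $\paymentagent(\valagent)=\bidagent\allocagent(\valagent)$ by the definition of the equivalent bid.

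The key step is the best-response/indifference bound: for any alternate allocation probability $\allocdev$, consider the cheapest action $\effaction(\allocdev)$ achieving allocation at least $\allocdev$, which has equivalent bid $\ppu(\allocdev)$ and payment at most $\ppu(\allocdev)\,\allocdev$ — actually equal to $\ppu(\allocdev)\bidallocagent(\effaction(\allocdev))$, and one must be slightly careful that $\bidallocagent(\effaction(\allocdev))$ may exceed $\allocdev$, but since $\ppu$ is nondecreasing this only helps. Deviating to $\effaction(\allocdev)$ gives utility at least $\valagent\allocdev-\ppu(\allocdev)\allocdev$ (using monotonicity of allocation in the cheapest-action selection), so optimality of the equilibrium action yields $\utilagent(\valagent)\geq(\valagent-\ppu(\allocdev))\allocdev$, i.e. $\ppu(\allocdev)\geq\valagent-\utilagent(\valagent)/\allocdev$ whenever this is positive. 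This is precisely the indifference-curve bound underlying \eqref{eq:taulowerbound}. From here I would integrate over $\allocdev$ exactly as in the proof of Lemma~\ref{lem:covering}: define the lower bound $\threshboundexpected{\allocagent(\valagent)}{\alloclevel}=\int_{\allocagent(\valagent)}^{\alloclevel}\max(0,\valagent-\utilagent(\valagent)/\allocdev)\,d\allocdev$, use the fact that $\bidagent\allocagent(\valagent)=\paymentagent(\valagent)$ bounds the threshold mass below $\allocagent(\valagent)$ (so that the integral can be slid down to start at $\utilagent(\valagent)/\valagent$), obtaining $\valagent\allocagent(\valagent)+\threshexpected{\allocagent(\valagent)}{\alloclevel}\geq\utilagent(\valagent)+\threshboundexpected{\utilagent(\valagent)/\valagent}{\alloclevel}$, and then evaluate the closed form $\utilagent(\valagent)+\threshboundexpected{\utilagent(\valagent)/\valagent}{\alloclevel}=\valagent\alloclevel+\utilagent(\valagent)\ln\frac{\utilagent(\valagent)}{\valagent\alloclevel}$ and minimize over $\utilagent(\valagent)$ to get the constant $\tfrac{e-1}{e}$.

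The main obstacle — and the only real difference from Section~\ref{sec:easy} — is justifying that the equivalent-bid abstraction faithfully preserves the two structural facts the first-price proof used: (i) that the payment for an action is its equivalent bid times its allocation probability (true by definition of $\equivbidact{\cdot}$, but one must handle $\bidallocagent(\actionagent)=0$ as a degenerate case, where IR and the withdraw action make the bound trivial), and (ii) that $\ppu(\allocdev)$ is nondecreasing in $\allocdev$ so that the cheapest-action-per-probability construction behaves monotonically, which was already noted in the definition of the cumulative equivalent threshold bid. Once these are in hand, the integration and optimization are verbatim the proof of Lemma~\ref{lem:covering}, so I would simply state that the argument of Lemma~\ref{lem:covering} applies, replacing the first-price bid $\bidagent$ with the equivalent bid and the threshold $\bidfi$ with the equivalent threshold bid $\ppu$.
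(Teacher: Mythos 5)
Your proposal is correct and follows essentially the same route as the paper: the appendix proof of Lemma~\ref{lem:vc} likewise uses the best-response deviation to the cheapest action $\effaction(\allocdev)$ to derive the indifference-curve bound $\ppu(\allocdev)\geq\valagent-\utilagent(\valagent)/\allocdev$, slides the integral down to $\utilagent(\valagent)/\valagent$ using the payment identity $\paymentagent=\equivbidagent\allocagent$, and minimizes the resulting closed form to obtain $\tfrac{e-1}{e}$. Your handling of the edge cases (zero allocation via the withdraw action, and $\bidallocagent(\effaction(\allocdev))>\allocdev$) matches the paper's treatment, though note the relevant reason the latter ``only helps'' is the sign of $\valagent-\ppu(\allocdev)$ rather than monotonicity of $\ppu$.
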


 The proof can now be done by reduction to the single-item first-price auction (Lemma~\ref{lem:covering}) because bidders now face effectively the same optimization problem as in a single-item first-price auction. The proof is \IFECELSE{left to the full version of the paper}{included in Appendix~\ref{sec:appframework}}. 
 
 To prove an approximation result for welfare or revenue, the \emph{only} mechanism-specific detail which remains is specifying the relationship between $\expthresholdagent$ and the mechanism's revenue. Intuitively, we saw in Section~\ref{sec:easy} that if there is a relationship between revenue and the difficulty an agent faces in increasing their allocation once they have chosen to participate in the mechanism, then value covering allows us to show a welfare bound. To make this relationship concrete, we extend the definition of Lemma~\ref{lem:monoprevapx}.

\begin{definition}[Revenue Covering]
\label{def:rc}
A mechanism $\mech$ is \emph{$\revpar$-revenue covered} if for any (implicit) strategy profile $\strat$, feasible allocation $\allocalt$, and action profile $\actions$,
\begin{equation*}
\revpar \rev(\mech)\geq\sum\nolimits_\agent \threshexpected{\bidallocagent(\actionagent)}{\allocaltagent}.
\end{equation*}
\end{definition}

Note that Definition~\ref{def:rc} makes no mention of BNE. It must hold for any strategy profile. This is a stronger condition than Lemma~\ref{lem:monoprevapx}, as it is not restricted to bidders with values above a set of reserves or bidders playing only participatory strategies. 

As we already saw, revenue covering has a number of important consequences. First is a welfare bound.

\begin{theorem}
\label{thm:poa}
If a mechanism is $\revpar$-revenue covered, then in any BNE it is a $(1+\mu)\frac{e}{e-1}-$approximation to the welfare of the optimal mechanism.
\end{theorem}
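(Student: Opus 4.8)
The plan is to combine Lemma~\ref{lem:vc} (value covering) with Definition~\ref{def:rc} ($\revpar$-revenue covering) in exactly the way Lemma~\ref{lem:covering} and Lemma~\ref{lem:monoprevapx} were combined to prove Theorem~\ref{thm:fpawelfare}, now in the abstract setting. Fix any BNE $\strat$ of $\mech$, let $(\alloc,\payment)$ be its induced interim allocation and payment rules, and let $\allocopt$ be the ex-post allocation rule of the welfare-optimal mechanism, so that $\WEL(\OPTmech)=\Ex[\val]{\sum_\agent \valagent\allocoptagent(\val)}$ and $\WEL(\mech)=\Ex[\val]{\sum_\agent \valagent\allocagent(\valagent)}$.

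\textbf{Step 1 (apply value covering at $\alloclevel=\allocoptagent(\val)$).} For each realized profile $\val$ and each agent $\agent$, invoke Lemma~\ref{lem:vc} with value $\valagent$ and alternate level $\alloclevel=\allocoptagent(\val)\in[0,1]$ (admissible since $\allocopt(\val)$ is feasible), giving $\valagent\allocagent(\valagent)+\threshexpected{\allocagent(\valagent)}{\allocoptagent(\val)}\geq\tfrac{e-1}{e}\valagent\allocoptagent(\val)$. Summing over $\agent$ and then taking expectation over $\val\sim\valuecdfs$ yields
\begin{equation*}
\WEL(\mech)+\Ex[\val]{\sum\nolimits_\agent\threshexpected{\allocagent(\valagent)}{\allocoptagent(\val)}}\ \geq\ \tfrac{e-1}{e}\,\WEL(\OPTmech).
\end{equation*}

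\textbf{Step 2 (apply revenue covering and bound revenue by welfare).} For each fixed $\val$, apply Definition~\ref{def:rc} with feasible target allocation $\allocalt=\allocopt(\val)$ and action profile $\actions=\strat(\val)$; since $\bidallocagent(\stratagent(\valagent))=\allocagent(\valagent)$, this gives $\revpar\,\rev(\mech)\geq\sum_\agent\threshexpected{\allocagent(\valagent)}{\allocoptagent(\val)}$ for every $\val$. Taking expectation over $\val$ and substituting into the inequality of Step~1 gives $\WEL(\mech)+\revpar\,\rev(\mech)\geq\tfrac{e-1}{e}\WEL(\OPTmech)$. Interim individual rationality forces each agent's interim utility to be nonnegative in BNE, so $\rev(\mech)\leq\WEL(\mech)$; substituting yields $(1+\revpar)\WEL(\mech)\geq\tfrac{e-1}{e}\WEL(\OPTmech)$, i.e.\ the price of anarchy for welfare is at most $(1+\revpar)\tfrac{e}{e-1}$.

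I do not expect a genuine obstacle here: the substance is already carried by Lemma~\ref{lem:vc} and is packaged into Definition~\ref{def:rc}, so this theorem is essentially a bookkeeping step. The one point that warrants care is the interim/ex-post interplay: $\threshexpected{\allocagent(\valagent)}{\allocoptagent(\val)}$ is built from agent $\agent$'s interim allocation rule $\allocagent$ and interim equivalent threshold bid $\ppu$, while the target level $\allocoptagent(\val)$ depends on the whole realized profile, so one should confirm that value covering (stated pointwise in the level $\alloclevel$) may legitimately be invoked at $\alloclevel=\allocoptagent(\val)$ for each $\val$ before summing over agents and integrating, and likewise that revenue covering (stated for a fixed feasible allocation and fixed action profile) applies at $\allocalt=\allocopt(\val)$, $\actions=\strat(\val)$ pointwise in $\val$; both are immediate, and the combination itself uses only linearity of expectation, no convexity.
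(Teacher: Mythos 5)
Your proposal is correct and follows essentially the same route as the paper's own proof: apply Lemma~\ref{lem:vc} pointwise at $\alloclevel=\allocoptagent(\val)$, sum over agents, invoke Definition~\ref{def:rc} at $\allocalt=\allocopt(\val)$ with $\actions=\strat(\val)$, take expectations, and finish with $\rev(\mech)\leq\WEL(\mech)$. The only difference is cosmetic ordering of the expectation and the revenue-covering substitution, and your careful note about the interim/ex-post interplay matches what the paper implicitly assumes.
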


\begin{proof}
Let $\allocopt$ be the welfare-optimal allocation rule. For any value profile $\val$, Lemma~\ref{lem:vc} with $\alloclevel=\allocoptagent(\val)$ yields that for each $\alloc$,
\begin{equation*}
\valagent\allocagent(\valagent)+\threshexpected{\allocagent(\valagent)}{\allocoptagent(\val)}
	\geq\tfrac{e-1}{e}\valagent\allocoptagent(\val).
\end{equation*}
\noindent
Summing over agents and using revenue covering gives
$\sum\nolimits_\agent \valagent \allocagent(\valagent) + \revpar \rev(\mech)
\geq \tfrac{e-1}{e} \sum\nolimits_\agent \valagent \allocoptagent(\val).
$  Taking expectation with respect to $\val$ and using $\WEL(\mech)\geq \rev(\mech)$, we get 
\begin{equation}
(1+\revpar)\frac{e}{e-1} \rev(\mech)\geq \frac{e}{e-1}\WEL(\OPTmech).
\end{equation} 
\end{proof}

\subsection{Revenue Covering With Reserves}
\label{sec:RRC}
%Smoothness approaches hinge on proving price of anarchy bounds in a restricted way. The restricted proofs imply extensions to broader environments. Our framework operates in this spirit, and to obtain a reserves extension, we impose similar restrictions.

%We will make use of two restrictions to revenue covering in order to extend. First, revenue covering can apply for only a certain set of agents --- that is, the revenue of a mechanism covers the thresholds of only a certain set of agents. Second, we can restrict

%\paragraph{Restrictions for Reserves}
Not all agents need their thresholds covered for a welfare and revenue approximation result. For instance, in Section~\ref{sec:easy}, we considered the first-price auction with monopoly reserves. Bidders with values below their reserve experience a threshold created by the reserve of the auction which does not translate into revenue. We showed instead that revenue covers the thresholds for bidders with values (and hence equilibrium bids) above their reserves, which was sufficient for approximation results because the optimal auction served no bidder with value below their reserve. 

%We now formalize this notion of revenue covering restricted to certain agents. We will specify a restriction using a mapping $\filter$ from value profiles to allowed agents, specifying the agents whose thresholds will be covered. In the case of individual reserves $\reserves=(r_1,\ldots,r_n)$, $\filter(\vals)=\{\agent\ |\ \valagent\geq \reserveagent\}$, the set of all agents with values above their individual thresholds. 

%how to introduce this the right way??
For the actions of such agents to be revenue covered, we do need bidders with values above their reserve to bid at least the reserve, just as in Section~\ref{sec:easy}. Given a strategy profile $\strat$ and value profile $\val$, define an action $\actionagent$ to be participatory for $\strat$ and $\val$ if the equivalent bid $\equivbidact{\actionagent}\leq v_i$ and either $\bidallocagent(\actionagent)>0$  or there is no alternate action $a_i'$ that gives $i$ strictly positive utility. Note that participatory is a much weaker assumption than equilibrium or best-response; it only specifies that bidders play some action that gives positive allocation if there is such an (IR) action. Thus in BNE, all bidders will play participatory actions.

%(For example, consider a first price auction where all bidders other than $\agent$ have bid supports which are bounded away from 0. Small positive bids for $\agent$ yield no allocation, but $\agent$ is indifferent between these bids and withdrawing, so we shouldn't distinguish between these actions.)

\begin{definition}[Revenue Covering with Reserves]
\label{def:restricted2}
\label{def:revcoverabove}
A mechanism $\mech$ is $\revpar$-revenue covered with reserves $\reserves=(\reserve_1, \ldots,\reserve_n)$ if for any (implicit) strategy profile $\strat$, value profile $\val$, feasible allocation $\allocalt$, and participatory action profile $\actions$,
\begin{equation*}
\revpar\rev(\mech)\geq\sum\nolimits_{i : \valagent> \reservei} \threshexpected{\bidallocagent(\actionagent)}{\allocaltagent}.
\end{equation*}
\end{definition}

Note that revenue covering above reserves is a weaker condition than general revenue covering, and is not relying on any specific property of the mechanism, like whether it does or does not have reserves. For instance, the first price auction with no reserves will satisfy revenue covering with any reserves.

\begin{theorem}
\label{thm:reswel}
If a mechanism $\mech$ is $\revpar$-revenue covered with reserves $\reserves=(\reserve_1, \ldots,\reserve_n)$, then the welfare of $\mech$ is a $(1+\revpar)e/(e-1)$-approximation to the welfare of the optimal mechanism which only serves agents with value $\valagent\geq \reservei$.
\end{theorem}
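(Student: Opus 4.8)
The plan is to follow the proof of Theorem~\ref{thm:poa} essentially line for line, with two changes: we invoke \emph{revenue covering with reserves} (Definition~\ref{def:revcoverabove}) in place of plain revenue covering, and we use the fact that the reserve-restricted optimal benchmark $\OPTr$ serves no agent whose value lies below its reserve in order to discard precisely the threshold terms that the weaker covering condition does not control. In other words, value covering is applied to \emph{all} agents, but when we pass to revenue covering only the above-reserve agents' thresholds need to be paid for, and those are exactly the ones that survive once we plug in the optimal allocation rule.

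In detail, fix a BNE $\strat$ of $\mech$ with induced interim rules $(\alloc,\payment)$, and let $\allocopt$ be the welfare-optimal allocation rule among those serving only agents with $\valagent\geq\reservei$, so that $\WEL(\OPTr)=\expect[\val]{\sum_\agent\valagent\allocoptagent(\val)}$ and $\allocoptagent(\val)=0$ whenever $\valagent<\reservei$. For each value profile $\val$ and agent $\agent$, Lemma~\ref{lem:vc} with $\alloclevel=\allocoptagent(\val)\in[0,1]$ gives $\valagent\allocagent(\valagent)+\threshexpected{\allocagent(\valagent)}{\allocoptagent(\val)}\geq\tfrac{e-1}{e}\valagent\allocoptagent(\val)$. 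Summing over $\agent$ and noting that $\allocoptagent(\val)=0\leq\allocagent(\valagent)$ for $\valagent<\reservei$, so that $\threshexpected{\allocagent(\valagent)}{\allocoptagent(\val)}=0$ there by the convention following the definition of the cumulative equivalent threshold bid, we obtain pointwise in $\val$ that $\sum_\agent\valagent\allocagent(\valagent)+\sum_{i:\valagent>\reservei}\threshexpected{\allocagent(\valagent)}{\allocoptagent(\val)}\geq\tfrac{e-1}{e}\sum_\agent\valagent\allocoptagent(\val)$. Since $\strat$ is a BNE, the realized action profile $\actions=\strat(\val)$ is participatory for every $\val$ and $\allocopt(\val)$ is feasible, so revenue covering with reserves applies with this action profile and this target allocation, yielding $\sum_{i:\valagent>\reservei}\threshexpected{\bidallocagent(\stratagent(\valagent))}{\allocoptagent(\val)}\leq\revpar\rev(\mech)$, where $\bidallocagent(\stratagent(\valagent))=\allocagent(\valagent)$. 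Substituting and taking expectation over $\val$ gives $\WEL(\mech)+\revpar\rev(\mech)\geq\tfrac{e-1}{e}\WEL(\OPTr)$; since welfare equals revenue plus the nonnegative agent utilities we have $\rev(\mech)\leq\WEL(\mech)$, hence $(1+\revpar)\WEL(\mech)\geq\WEL(\mech)+\revpar\rev(\mech)\geq\tfrac{e-1}{e}\WEL(\OPTr)$, i.e.\ $\WEL(\mech)$ is a $(1+\revpar)e/(e-1)$-approximation to $\WEL(\OPTr)$.

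The argument is mostly routine given the two covering statements, so there is no serious obstacle; the single point that requires care --- and the only place the reserve structure is used --- is the bookkeeping step showing that in passing from $\sum_\agent$ to $\sum_{i:\valagent>\reservei}$ in the threshold sum we discard only terms that are identically zero (because $\allocoptagent(\val)=0$ below the reserve), so that the weaker reserve-restricted revenue covering condition suffices. The only delicate case is that of agents with value exactly at the reserve: under the paper's standing no-point-mass assumption the event $\valagent=\reservei$ has probability zero and may be ignored, and in general one simply phrases the benchmark as serving no agent with $\valagent\leq\reservei$ (equivalently, verifies that at-reserve agents contribute equally on both sides of the value covering inequality).
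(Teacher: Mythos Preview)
Your proof is correct and follows essentially the same approach as the paper: apply value covering (Lemma~\ref{lem:vc}) with $\alloclevel=\allocoptagent(\val)$, observe that below-reserve agents contribute nothing to either the threshold sum or the optimal benchmark, invoke revenue covering with reserves on the participatory BNE action profile, take expectations, and finish with $\WEL(\mech)\geq\rev(\mech)$. Your presentation is in fact slightly more careful than the paper's in making explicit why the below-reserve threshold terms vanish and in flagging the measure-zero issue at $\valagent=\reservei$.
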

\begin{proof}
Let $\allocopt$ be the welfare-optimal allocation rule, and consider some value profile $\val$. Lemma~\ref{lem:vc} with $\alloclevel=\allocoptagent(\val)$ yields that for each $\alloc$ and value $\valagent$,
\begin{equation}
\valagent\allocagent(\valagent)+\threshexpected{\allocagent(\valagent)}{\allocoptagent(\val)}
	\geq\tfrac{e-1}{e}\valagent\allocoptagent(\val).\notag
\end{equation}

As all bidders in BNE play participatory actions, summing over all agents with values above reserves and using revenue covering yields:
\begin{equation}
\sum\nolimits_{\agent:\valagent > \reservei} \valagent \allocagent(\valagent) + \revpar \rev(\mech)
\geq \tfrac{e-1}{e} \sum\nolimits_{\agent:\valagent > \reservei} \valagent \allocoptagent(\val). \notag
\end{equation}

Taking expectation with respect to $\val$, noting that $\allocopt$ only serves agents with $\valagent\geq \reservei$ and agents have value $\reservei$ with probability $0$, we get $ \WEL(\mech)+\revpar \rev(\mech)\geq\frac{e-1}{e}\WEL(\OPTmech)$; hence $ (1+\revpar)\frac{e}{e-1} \rev(\mech)\geq\frac{e}{e-1}\WEL(\OPTmech)$.
\end{proof}

%DH - Commenting out for now - can we get away with not even talking about it?

%\paragraph{Covering-Preserving Reserves}
%For many auctions, it is possible to add reserves to a revenue covered mechanism and preserve revenue covering restricted to bidders with values above the implemented reserves. As a result, the BNE welfare of the reserves mechanism approximates the welfare of the optimal mechanism that serves only those agents. In \IFECELSE{the appendices of the full version of this paper}{Appendix~\ref{sec:appres}}, we provide a general set of conditions under which reserves preserve revenue covering in this manner. These conditions hold, for example, in first-price matroid and position auctions, as well as under simultaneous composition. These mechanisms with reserves consequently meet the conditions of Lemma~\ref{thm:reswel}, yielding a welfare approximation, and as we show in the next section, a revenue approximation.
\section{Revenue Approximation}\label{sec:revenue}
Our welfare result hinges on the complementary relationship between
the utility of a bidder and the bids of other bidders in the
mechanism. Using this relationship to directly bound revenue is not as
straightforward. The results of \citet{M81}, however, provide another
method of accounting for each bidder's impact on revenue, their
\emph{virtual value}. Using virtual surplus in place of utilities
allows us to adapt our method for proving welfare guarantees to the
objective of revenue.

\subsection{Revenue}
The welfare of a mechanism can be expressed as the expected total
value of agents who are served. \citet{M81} demonstrated a similar
characterization of revenue in terms of the expected total virtual
value, reducing the problem of revenue maximiation to welfare
maximization. We adopt a similar approach, analyzing revenue using
tools developed for welfare.

We will begin by showing the analogue of value covering, \emph{virtual
  value covering}, in which each bidder's positive contribution to
equilibrium virtual welfare and expected threshold bid combine to
approximate her virtual value, which by \citet{M81} is upperbounds her
contribution to the optimal revenue.

\begin{lemma}[Virtual Value Covering] \label{lem:vvalcoveringeasy}
In any BNE of the first price auction, for any bidder $i$ with value $\valagent$ such that $\vvalagent(\valagent)\geq 0$, 
\begin{equation}
\vvalagent(\valagent)\allocagent(\valagent) + \threshexpected{\allocagent(\valagent)}{\alloclevel} \geq \tfrac{e-1}{e} \vvalagent(\valagent). \label{eq:vvalcoveringeasy}
\end{equation}
\end{lemma}  
\begin{proof}
First note that surplus is an upper bound on utility, i.e.: $\valagent\allocagent(\valagent)\geq\utilagent(\valagent)$. Combined with \Cref{lem:covering}, this implies that 
\begin{equation}
\valagent\allocagent(\valagent) + \threshexpected{\allocagent(\valagent)}{}\geq \tfrac{e-1}{e}\valagent.
\label{eq:valvalcovering}
\end{equation}
By the definition of virtual value as $\vvalagent(\valagent)=\valagent-\frac{1-\valuecdf(\valagent)}{\valuepdf(\valagent)}$, we have that $\vvalagent(\valagent)\leq\valagent$. Substituting $\vvalagent(\valagent)$ for $\valagent$ in \eqref{eq:valvalcovering} therefore only weakens the inequality, which implies the result.
\end{proof}

See \Cref{fig:revenueboundgeo} for an illustration. Intuitively,
value covering captures the idea that the expected threshold makes up
the difference between an agent's utility and their value. The
difference between virtual surplus and virtual value is proportionally
smaller, so the expected threshold can cover that gap as well.

\Cref{lem:vvalcoveringeasy} does not immediately imply a revenue
approximation result, as virtual value covering only implies an
approximation for agents with positive virtual value. A revenue
approximation result requires the revenue impact of agents with
negative virtual value to be mitigated as well. In
\Cref{sec:reserves}, we show that reserve prices suffice for this
purpose. In \Cref{sec:duplicates}, we prove that sufficient
competition also reduces negative virtual surplus and implies an
approximation result.

\subsection{Reserve Prices}
\label{sec:reserves}
In this section, we show how to adapt the framework of value and
revenue covering to accomodate auctions with reserves. The framework
was driven by two key arguments. Value covering showed that either an
agent was receiving high utility or faced large impediments to
obtaining allocation. Revenue covering captured the argument that when
the agent could not get allocated easily, the mechanism must be
obtaining high revenue. Reserve prices complicate this second
argument: an agent with a high reserve might face difficulty winning
because of their reserve, which, unlike other agents' bids, does not
generate revenue for the mechanism (if the agent loses). We solve this
problem below by proving relaxed versions of virtual value and revenue
covering. When combined with monopoly reserves
(i.e.\ $\reserveagent=\vvalagent^{-1}(0)$) for each agent, they will
combine to produce a revenue approximation result, albeit with a
slightly larger constant.

%We consider first price auctions with a reserve set to exclude exactly the agents with negative virtual values. Under the assumption that bidders' value distributions are regular, it suffices to set monopoly reserves at $\reserveagent=\vvalagent^{-1}(0)$. This reserve price acts as an additional impediment to allocation and contributes to agents' thresholds just like other agents' bids. Unlike bids, the reserve contributes no revenue when it prevents an agent from getting allocated. In other words, reserves may cause revenue covering not to hold.

%To salvage revenue covering, we note that the reserve (rather than bids) only binds to exclude an agent when all other agents bid below this reserve. The rest of the time, it is bids which determine allocation, and these bids translate to revenue. Intuitively, this suggests that a form of revenue covering holds much of the time.

The relaxed version of the value and revenue covering framework will
use as its pivotal quantity a restricted version of an agent's
expected threshold bid. As before, $\bidfi(\allocation)=\min
\{\bid\ |\ \bidallocagent(\bid)\geq \allocation \}$ denotes the
smallest bid that achieves allocation of at least $\allocation$. Now,
however, note that below $\bidallocagent(\reserveagent)$, it no longer
corresponds to the inverse cumulative distribution function of the
highest bids from all other agents. For $x\leq
\bidallocagent(\reserveagent)$, $\bidfi(\allocation)=\reserveagent$ -
the threshold comes from the reserve price. Above this point, the
threshold comes from the highest bid from other agents, as before. For
any bid $b$, we therefore define the \emph{expected threshold above
  $b$} to be $\thresholdragent{b}= \int_{\bidallocagent(b)}^{1}
\bidfi(z)\ dz$. With $b=\reserveagent$, the expected threshold above
$\reserveagent$ is precisely the portion of the expected threshold
generated by bids. See \Cref{fig:thresholdreserves} for an
illustration.

%Comparison of welfare and revenue approximation pictures.
	\begin{figure}[t]
			\small
			\centering
			%\hspace*{\fill}
			\subfloat[][In a first-price auction with reserve $\reserveagent$, the threshold above $\reserveagent$, $\thresholdragent{\reserveagent}$, only includes the thresholds when greater than $\reserveagent$, which corresponds to the case that the threshold is from a bid from another agent rather than the reserve.\label{fig:thresholdreserves} ]{
		\begin{tikzpicture}[xscale=3.5, yscale=3.5, domain=0:0.9, smooth]
		\draw[ pattern=north west lines, pattern color=lightgray] (0,0) -- (0,1) -- (1.12, 1) -- (.91,0.7) .. controls ( 0.68,0.4) and ( 0.6,0.32) .. (0.16,0.1) --(0,0.1);
		\node [fill=white] at ( 0.39,0.56) {\footnotesize $\thresholdragent{\reserveagent}$};
		\draw  (1.12,1) -- (.91,0.7) .. controls ( 0.68,0.4) and ( 0.6,0.32) .. (0.16,0.1) --(0.16,0);
	    \node at (1.2,0.9) {\footnotesize $\bidallocagent(\dev)$};
	    %axes
	    \draw[-] (0,0) -- (0,1) node[left] {$1$};
		\draw[-] (0,0) -- (1.4,0) node[below] {Bid ($\dev$)};
		%\node at (1, -0.09) {\footnotesize $\valagent$};
		%\draw[] (1, 0) -- (1, -0.025);
		\node at (0.16, -0.09) {\footnotesize $\reserveagent$};
		\draw[] (0.16, 0) -- (0.16, -0.025);				
		%\draw[pattern=north east lines, pattern color=lightgray] ( 0.5,0) rectangle ( 1,0.29);
		%\node [fill=white!50] at (0.75,0.15){\footnotesize $\bidutilagent(\bidagent)$};
		%\draw[ pattern=north west lines, pattern color=lightgray] (0,0) -- (0,0.7) -- (.91,0.7) .. controls ( 0.68,0.4) and ( 0.6,0.32) .. (0,0);
		%\node [fill=white] at ( 0.3,0.45) {$\thresholdagent(\allocalt,\strat)$};
		\node at (-0.16,0.3) {\footnotesize $\allocagent(\valagent)$};	
		\draw[] (-0.02, 0.29) -- (0, 0.29);	
		%\node at (-0.12, 0) {$\allocmin$};			
		%\node at (0.5,-0.09) {\footnotesize $\bidagent$};
		%\draw[] (0.5, 0) -- (0.5, -0.025);
		%\draw[line width=1.3pt, dashed] (0,0) rectangle (1,1);
		%\node at (-.1, 1.08) {Allocation};
		\node at (0.7,1.1) {Bid Allocation Rule};
	\end{tikzpicture}
	%REVENUE PICTURE - VIRTUAL VALUES
		}\IFECELSE{\hspace*{0.2cm}}{\hspace*{1cm}}
			\subfloat[][\Cref{lem:valcoveringreserves} shows that $\valagent \allocagent(\valagent)$ and $\thresholdragent{\reserveagent}$ cover an $(e-1)/e$ fraction of $i$'s  value $\valagent$, because $\valagent\allocagent(\valagent)-\utilagent(\valagent)=\bidagent\allocagent(\valagent)$ covers $\expthresholdagent-\thresholdragent{\reserveagent}$ \label{fig:vcreserves}]{
	\begin{tikzpicture}[xscale=3.5, yscale=3.5, domain=0:0.9, smooth]
		\draw[ pattern=north west lines, pattern color=lightgray] (0,0) -- (0,1) -- (1.12, 1) -- (.91,0.7) .. controls ( 0.68,0.4) and ( 0.6,0.32) .. (0.16,0.1) --(0,0.1);
		\node [fill=white] at ( 0.39,0.56) {\footnotesize $\thresholdragent{\reserveagent}$};
		\draw  (1.12,1) -- (.91,0.7) .. controls ( 0.68,0.4) and ( 0.6,0.32) .. (0.16,0.1) --(0.16,0);
	    \node at (1.2,0.9) {\footnotesize $\bidallocagent(\dev)$};
	    %axes
	    \draw[-] (0,0) -- (0,1) node[left] {$1$};
		\draw[-] (0,0) -- (1.4,0) node[below] {Bid ($\dev$)};
		\node at (1, -0.09) {\footnotesize $\valagent$};
		\draw[] (1, 0) -- (1, -0.025);
		\node at (0.16, -0.09) {\footnotesize $\reserveagent$};
		\draw[] (0.16, 0) -- (0.16, -0.025);				
		\draw[pattern=north east lines, pattern color=lightgray] ( 0,0) rectangle ( 1,0.29);
		\node [fill=white!50] at (0.5,0.15){\footnotesize $\valagent\allocagent(\valagent)$};
		%\draw[ pattern=north west lines, pattern color=lightgray] (0,0) -- (0,0.7) -- (.91,0.7) .. controls ( 0.68,0.4) and ( 0.6,0.32) .. (0,0);
		%\node [fill=white] at ( 0.3,0.45) {$\thresholdagent(\allocalt,\strat)$};
		\node at (-0.16,0.3) {\footnotesize $\allocagent(\valagent)$};	
		\draw[] (-0.02, 0.29) -- (0, 0.29);	
		%\node at (-0.12, 0) {$\allocmin$};			
		\node at (0.5,-0.09) {\footnotesize $\bidagent$};
		\draw[] (0.5, 0) -- (0.5, -0.025);
		\draw[line width=1.3pt, dashed] (0,0) rectangle (1,1);
		%\node at (-.1, 1.08) {Allocation};
		\node at (0.7,1.1) {Bid Allocation Rule};
	\end{tikzpicture}
	}
	\caption{}
\end{figure}
\noindent

%\subsection{Value Covering}

We now prove weaker notions of value and virtual value covering using
$\thresholdragent{\reserveagent}$ instead of
$\expthresholdagent$. Because $\thresholdragent{\reserveagent}\leq
\expthresholdagent$, value covering as stated in \Cref{lem:covering}
no longer holds. To solve this problem, we increase the lefthand side
by including an agent's expected payments. Meanwhile, revenue covering
still holds with $\thresholdragent{\reserveagent}$ instead of
$\expthresholdagent$. See \Cref{fig:vcreserves} for an
illustration. Formally:

\begin{lemma}[Value Covering with Reserves] \label{lem:valcoveringreserves}
In any BNE of $\FPAr$, for any bidder $i$ with value $\valagent\geq \reserveagent$, 
\begin{equation}
\valagent\allocagent(\valagent) + \thresholdragent{\reserveagent} \geq \tfrac{e-1}{e} \valagent. \label{eq:valcovering_res}
\end{equation}
\end{lemma} 
\begin{proof}
In BNE, if an agent's value is above the reserve, it is a best
response to bid at least the reserve. That is,
$\actionagent(\valagent)\geq \reserveagent$. If the reserve price is
ever binding for agent $\agent$, we have
$\thresholdragent{\reserveagent} + \paymentagent(\valagent) =
\thresholdragent{\reserveagent} +
\bidagent(\valagent)\allocagent(\valagent) \geq
\thresholdragent{\reserveagent} +\reserveagent
\bidallocagent(\reserveagent) = \expthresholdagent$. Otherwise, the
reserve never binds, and hence
$\thresholdragent{\reserveagent}=\expthresholdagent$. The result now
follows from applying \Cref{lem:covering} and the definition of
bidder utility.
\end{proof}

\begin{lemma}[Virtual Value Covering with Reserves]\label{lem:vvalres}
In any BNE of $\FPAr$, for any bidder $i$ with value $\valagent\geq
\reserveagent$ such that $\vvalagent(\valagent)\geq 0$,
\begin{equation}
\vvalagent(\valagent)\allocagent(\valagent) + \thresholdragent{\reserveagent} \geq \tfrac{e-1}{e} \vvalagent(\valagent). \label{eq:vvalcoveringeasyres}
\end{equation}
\end{lemma}  
\begin{proof}
Because $\valagent\geq\vvalagent(\valagent)$ for all values
$\valagent$, the result follows from \Cref{lem:valcoveringreserves}.
\end{proof}

Adapting revenue covering to accomodate reserves is simple. As
previously mentioned, the thresholds for bidder $\agent$ above
$\reserveagent$ correspond to bids from other agents. It follows that
this portion of $\agent$'s expected threshold corresponds to
revenue. We can formalize this with the following lemma:

\begin{lemma}[Revenue Covering with Reserves]\label{lem:rcres}
Fix an arbitrary bid distribution $\actionsdist$ of the first price
auction with reserves $\reserves$. For any feasible allocation
$\allocalt$,
\begin{equation}
\label{eq:FPARCr}
\REV(\FPAr,\actionsdist) \geq  \sum_{\agent}\thresholdragent{\reserveagent}\allocaltagent.
\end{equation}
\end{lemma}
\begin{proof}
For any agent $\agent$, recall that $\thresholdragent{\reserveagent}=
\int_{\bidallocagent(\reserveagent)}^{1} \bidfi(z)\ dz$. That is,
$\thresholdragent{\reserveagent}$ is the contribution to $\agent$'s
expected threshold bid from other agents' bids above
$\reserveagent$. Whenever $\agent$ faces a threshold from such a bid,
the mechanism makes at least as much revenue, as it charges the
highest bid. Hence for every agent $\agent$,
$\REV(\FPAr,\actionsdist)\geq\thresholdragent{\reserveagent}$. The
result follows.
\end{proof}

With regular value distributions, adding the monopoly reserves $\reserveagentm=\vvalagent^{-1}(0)$ to the auction excludes exactly the agents with negative virtual values. It follows that for such reserves, \eqref{eq:vvalcoveringeasyres} holds whenever $\valagent\geq\reserveagentm$. Moreover, the optimal mechanism for revenue allocates the item to the agent with the highest positive virtual value. To approximate the optimal revenue, it therefore suffices to approximate this agent's expected virtual value. By adapting revenue covering and virtual value covering to the first price auction with reserves, we are able to treat this quantity much as we treated welfare, yielding the following:

%we can now consider the first price auction with monopoly reserves, $\reserveagentm=\vvalagent^{-1}(0)$. 

%Without reserves, we used the notion of value covering to capture the intuition that bidder $i$'s expected threshold covered the gap between bidder $i$'s utility and their value. Because the expected threshold translated into revenue via revenue covering, we were able to prove a price of anarchy result. We wish to capture a similar value covering intuition after the addition of reserves, but to get a price of anarchy result, we again require that the expected threshold bids used correspond to agents' bids. In other words, we need a definition of value covering involving $\thresholdragent{\reserveagent}$.

\begin{theorem}\label{thm:revmonopeasy}
The revenue in any BNE of the first price auction with monopoly reserves and agents with regularly distributed values is at least a $2e/(e-1)$-approximation to revenue of the optimal auction.
\end{theorem}
\begin{proof}
We begin by summing inequality \eqref{eq:vvalcoveringeasyres} for each agent with $\valagent\geq \reserveagentm$ in an arbitrary value profile $\vals$:
\begin{equation}
\sum_{\agent:\valagent\geq \reserveagentm}\vvalagent(\valagent)\allocagent(\valagent)+\sum_{\agent:\valagent\geq \reserveagentm}\thresholdragent{\reserveagent}\geq\frac{e-1}{e}\sum_{\agent:\valagent\geq \reserveagentm}\vvalagent(\valagent).
\notag
\end{equation}
Let $\allocopt(\vals)$ be the allocation of the optimal mechanism on $\vals$. Since $\allocoptagent(\vals)\leq 1$ for each agent $\agent$, and since $\vvalagent(\valagent)\allocagent(\valagent)\geq 0$, we obtain:
\begin{equation}
\label{eq:summed}
\sum_{\agent:\valagent\geq \reserveagentm}\vvalagent(\valagent)\allocagent(\valagent)+\sum_{\agent:\valagent\geq \reserveagentm}\thresholdragent{\reserveagent}\allocoptagent(\vals)\geq\frac{e-1}{e}\sum_{\agent:\valagent\geq \reserveagentm}\vvalagent(\valagent)\allocoptagent(\vals).
\end{equation}

Both the first-price auction with monopoly reserves and the optimal
auction exclude agents with $\valagent<\reserveagentm$. Taking
expectations of \eqref{eq:summed}, the first term is the expected
revenue of the first-price auction with monopoly reserves
$\REV(\FPAm)$, and the sum on the right-hand side is the optimal
revenue, $\REV(\OPTm)$. Applying \Cref{lem:rcres} to the second term
on the left-hand side, therefore, yields $2\REV(\FPAm)\geq
\tfrac{e-1}{e}\REV(\OPTm)$, as desired.
\end{proof}

\subsection{Duplicate bidders}
\label{sec:duplicates}	

Another approach to mitigating the impact of negative virtual-valued
agents is to ensure each agent faces adequate
competition. \citet{BK96} show that this intuition guarantees
approximately optimal revenue in regular, symmetric, single-item
settings. In particular, their results can be interpreted as showing
that the second price auction with any reserve, even one which allows
agents with negative virtual values to be allocated, cannot have its
revenue too badly diminished by the contributions of low-valued
agents.

Formally, for any mechanism $\mech$, let $\REVpos(\mech) = \sum_i
\Ex[\valagent]{\max(0, \vvalagent(\valagent)) \allocagent(\valagent)}$
denote the expected positive virtual surplus of $\mech$. Given a
symmetric randomized reserve distribution $R$, let $\SPA_R$ denote the
second price auction with reserve $R$. A simple reinterpretation of
\citet{BK96} shows the following:

\begin{theorem}[\citealp{BK96}]
\label{thm:bk}
 For any symmetric randomized reserve with distribution $R$ and any single-item environment with $n$ i.i.d. regular bidders, the following inequality holds:
 \begin{equation}
 \REV(\SPA_R)\geq\tfrac{n-1}{n}\REVpos(\SPA_R).
 \end{equation}
\end{theorem}

%\citet{HR09} extend this intuition beyond single-item settings with \VCG\ in asymmetric settings: if every bidder must compete with a 
%duplicate, the revenue approximates the optimal in the unduplicated environment.

We show the same intuition holds for first-price and all-pay auctions
in asymmetric settings: if each bidder must compete with at least
$k-1$ other bidders with values drawn from her same distribution,
revenue is approximately optimal compared to the revenue-optimal
mechanism (including the duplicate bidders). We say such a setting
satisfies \emph{$k$-duplicates.} Formally:

\begin{definition}
A single-item environment satisfies \emph{$k$-duplicates} if the set
of agents can be partitioned into groups $B_1,\ldots,B_p$ for some
positive integer $p$ such that $|B_j|\geq k$ and the agents in $B_j$
are identically distributed, for each $j$ in $\{1,\ldots,p\}$.
\end{definition}

\newcommand{\FPAk}{\FPA_{k}}

We will generalize the analysis of \citet{BK96} to the first-price
auction ($\FPA$) and all-pay auction ($\APA$) with $k$-duplicates. The
first-price analysis will combine with the value and revenue covering
framework to produce a revenue approximation result. Moreover, in
\Cref{sec:allpay}, we extend the framework to include the
all-pay auction, which will yield a revenue result for that mechanism
as well.

\begin{lemma}
\label{lem:ourbk}
In any single-item setting with $k$-duplicates and regular value
distributions, the following inequalities hold:
\begin{align*}
\REV(\FPA)\geq \tfrac{k-1}{k}\REVpos(\FPA)\\
\REV(\APA)\geq \tfrac{k-1}{k}\REVpos(\APA)
\end{align*}
\end{lemma}

The proof reduces analyzing the allocation rule for each group of
duplicates to analyzing that of a second-price auction with a
randomized reserve generated by bidders outside the group. We may then
apply \Cref{thm:bk} and sum the virtual surplus from the different
groups. The full proof is included in \Cref{sec:app-rev}.

We can combine \Cref{lem:ourbk} with revenue covering and value
covering to derive a revenue bound for BNE of the first price auction,
which we state below:

\begin{theorem}\label{thm:fpadupes}
In any single-item environment with $k$-duplicates and regular value
distributions, the revenue in any BNE of the first price auction
($\FPA$) is at least a $\frac{k}{k-1}\frac{2e}{e-1}$ approximation to
the revenue of the optimal auction.
\end{theorem}

The proof is included in \Cref{sec:app-rev}. We discuss all-pay auctions in \Cref{sec:allpay} and derive similar revenue bounds.

\subsection{Revenue Lower Bounds}
For revenue, the approximation ratio of the first-price auction with
monopoly reserves can be at least as bad as $2$.  The same result was
shown by \citet{HR09} for the second-price auction with monopoly
reserves with the following two-agent example. One agent has a
deterministic value of $1$, the other agent has value drawn according
to the equal revenue distribution with support over $[1, H]$ for some
large $H$, with a light perturbation of the distribution so the
monopoly price is $1$. Assuming ties go to bidder $2$, an equilibrium
exists where both players bid $1$, giving revenue of $1$. The optimal
auction however can set a reserve of $H$ for the second bidder and
sell to the first bidder at price $1$ if the reserve is not met,
achieving a revenue of $2$ as $H$ grows.

%TODO - add note about the two auctions?
% In the single-item first-price and 
%
%Using a result of \citet{CH13}
%
%Additionally, we show that the first-price and all-pay auctions in which each bidder we use a result of \citet{CH13} to show that in a single-item setting, all bidders in a class that includes first-price and all-pay auctions (rank-and-bid based allocation rules, and bid-based payments) will behave symmetrically in BNE. As we showed the first-price auction is revenue covered in Section~\ref{sec:easy}, we can combine Lemma~\ref{lem:}
%Combining Lemma~\ref{lem:kbiddersnegvv} with Theorem~\ref{thm:revapx} and our earlier proof that the first-price auction is revenue covered then gives a revenue approximation result:
%
%\begin{corollary}
%\label{cor:corfpareve} The expected revenue in any BNE of a first-price single-item auction in a regular environment with at least 2 duplicates is a $3e/(e-1)$ approximation to the revenue of the optimal mechanism.
%\end{corollary}
%Proof included in the appendix of the full version of the paper.
%
%%\begin{corollary}
%%\label{cor:corallpayreve} The expected revenue in any BNE of an all-pay auction in a regular, matroid environment with at least 2 duplicates is a $4e/(e-1)$-APX to the revenue optimal mechanism.
%%\end{corollary}
%
%As the number of duplicates per group grows large, the approximation approaches $2e/(e-1)$.

\section{Revenue Covering}
\label{sec:welfare}
We have now shown that revenue-covering is a sufficient condition for a number of welfare and revenue approximation results:
\begin{itemize}
\item If $\mech$ is $\revpar$-revenue covered, the welfare of $\mech$ is a $(1+\mu)\frac{e}{e-1}-$approximation to the optimal welfare. (Theorem~\ref{thm:poa} in Section~\ref{sec:frameworkpoa}) 
\item If $\mech$ implements monopoly reserves $\reserves^*$ and is $\revpar$-revenue covered with $\reserves^*$, the revenue of $\mech$ is a $(1+\mu)\frac{e}{e-1}-$approximation to the optimal revenue. (Lemma~\ref{lem:monopreserves} in Section~\ref{sec:reserveprices})
\item If $\mech$ is $\revpar$-revenue covered and implements $k-$duplicates, the revenue of $\mech$ is a $(\frac{k}{k-1} + \revpar )\frac{e}{e-1}$-approximation to the optimal revenue. (Corollary~\ref{cor:kbiddersrev} in Section~\ref{sec:duplicates})
\end{itemize}

In this section, we show that many auctions beyond the single-item, first-price auction are revenue covered, and thus get new welfare and revenue approximation results for each. %Missing proofs are included in \IFECELSE{the appendices of the full version of this paper}{Appendix~\ref{sec:apprevcovering}}.

\subsection{First Price Matroid Auctions}
In our discussion of the single-item case (Section~\ref{sec:easy}), we saw that when an agent has trouble getting allocated in a first price auction (that is, the cumulative additional threshold $\expthresholdagent$ is high), it is because other agents submit high bids. These competing bids translate into revenue, implying that the first-price auction is 1-revenue covered. With one extra step, this reasoning extends to first-price auctions where the feasible allocations form a matroid. An agent's threshold bid does not precisely correspond to a competing bid, but matroid properties provide a sufficiently close analog, implying revenue covering. Combining with Theorem~\ref{thm:poa} and Lemma~\ref{lem:monopreserves} respectively imply welfare and revenue approximations of $2e/(e-1)$ with reserves. If the auction implements $k$-duplicates, which happens in the single-item setting when at least $k$ bidders have values drawn from each distribution, then by Corollary~\ref{cor:kbiddersrev} it is a $\frac{k}{k-1}\frac{2e}{e-1}$ approximation to the revenue optimal mechanism.

\begin{lemma}
\label{lem:pybrevcover}
The first-price auction is 1-revenue covered in any matroid feasibility environment.
\end{lemma}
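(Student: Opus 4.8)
The plan is to bound $\sum_\agent \threshexpected{\bidallocagent(\actionagent)}{\allocaltagent}$ above by $\rev(\FPA)$ for any feasible allocation $\allocalt$ and any (implicit) strategy profile. In the first-price matroid auction, once bids $\bids$ are fixed the mechanism serves a max-weight independent set with respect to the bids, and the winners pay their bids; so $\rev(\FPA)=\expect[\vals]{\sum_{\agent \text{ served}}\bidagent}$, the expected weight of the max-weight basis of the realized bid vector restricted to the matroid. The key structural fact I would use is that the equivalent threshold bid $\ppu(\allocdev)$ for agent $\agent$ is determined by the bids of the \emph{other} agents: for a given realization $\bidothers$ of others' bids, agent $\agent$ is served at bid $b$ iff $b$ exceeds the minimum weight needed to be in the max-weight basis given $\bidothers$ — call this critical value $\critbid{\bidothers}$ — and this is exactly the threshold in the sense of the single-item analysis generalized to matroids. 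So $\bidallocagent(b)=\Prx[\bidothers]{\critbid{\bidothers}\le b}$ and $\ppu(\allocdev)$ is the $\allocdev$-quantile of the distribution of $\critbid{\bidothers}$; hence $\threshexpected{0}{1}=\expect[\bidothers]{\critbid{\bidothers}}$.

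The next step is the ``one extra step'' alluded to in the text: I would show that $\sum_\agent \allocaltagent \cdot \critbid{\bidothers^{(\agent)}} \le$ (weight of the max-weight basis given all realized bids), pointwise in the realized bid profile, for any feasible $\allocalt$. This is where matroid exchange properties enter. For a single realized bid profile $\bids$, let $I$ be the max-weight independent set chosen. For each agent $\agent$, $\critbid{\bidothers}$ is the smallest bid at which $\agent$ would displace someone and enter the optimal set — by the matroid augmentation/exchange property this equals the bid of the agent that $\agent$ would ``kick out'' along the unique fundamental circuit, or $0$ if $\agent\in I$ already at bid $0$. Since $\allocalt$ is a feasible (independent-set) allocation, I can use the fact that any independent set can be matched into the chosen basis $I$ via the exchange property (strong basis exchange / the matroid union bound), so $\sum_\agent \allocaltagent \critbid{\bidothers^{(\agent)}}\le \sum_{\agent\in I}\bidagent$. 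Taking expectations over $\bids$ gives $\sum_\agent \allocaltagent \expect[\bidothers]{\critbid{\bidothers}}\le \rev(\FPA)$, i.e.\ $\sum_\agent \allocaltagent\,\threshexpected{0}{1}\le\rev(\FPA)$. Finally, convexity of $\threshexpected{\bidallocagent(\actionagent)}{\cdot}$ (noted in the preliminaries, since $\ppu$ is nondecreasing) gives $\threshexpected{\bidallocagent(\actionagent)}{\allocaltagent}\le \allocaltagent\,\threshexpected{\bidallocagent(\actionagent)}{1}\le\allocaltagent\,\threshexpected{0}{1}$, and summing over $\agent$ finishes the proof with $\mu=1$.

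The main obstacle is making the pointwise matroid inequality $\sum_\agent \allocaltagent\,\critbid{\bidothers^{(\agent)}}\le\sum_{\agent\in I}\bidagent$ precise and correct when $\allocalt$ is a fractional point in the matroid polytope rather than a single independent set. I expect to handle this by writing $\allocalt$ as a convex combination of independent sets (Carathéodory in the matroid polytope) and proving the inequality for each independent set $J$ in the combination: one needs that the critical bids $\{\critbid{\bidothers^{(\agent)}}\}_{\agent\in J}$ can be simultaneously ``charged'' to distinct elements of $I$ with no smaller bids, which follows from the exchange property applied to the pair $(J, I)$ — every independent set $J$ admits an injection $\pi:J\to I$ with $I-\pi(j)+j$ independent for all $j$, and by optimality of $I$ the displaced element $\pi(j)$ has $b_{\pi(j)}\ge \critbid{\bidothers^{(j)}}$. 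A secondary subtlety is the case $\allocdev=0$ and agents already in the basis at zero bid, but those contribute a zero threshold and cause no trouble. I would also double-check that the definition of $\critbid{\cdot}$ as the equivalent threshold bid $\ppu$ matches the ``cheapest action achieving allocation'' definition in the framework — in a first-price auction every bid is its own equivalent bid, so $\ppu(\allocdev)$ is just the bid quantile and no discrepancy arises.
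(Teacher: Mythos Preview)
Your proposal is correct and follows essentially the same route as the paper. The paper isolates the pointwise matroid inequality as a separate lemma (Lemma~\ref{lem:bidsvsthresholds}), proved via the bijective replacement property between the served basis $S$ and the alternate set $S'$; it then takes expectations, identifies $\expect{\critbid{\valothers}}=\threshexpected{0}{1}$, and finishes with the same convexity step you use. Your Carath\'eodory decomposition for fractional $\allocalt$ is fine but more than needed: the pointwise inequality is linear in $\allocalt$, so once it holds for every independent set it holds on the matroid polytope automatically.
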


The proof is included in Appendix~\ref{sec:apprevcovering}.

\subsection{Position Auctions}
In first-price position auctions (a.k.a., the generalized first-price auction, GFP), arguments similar to those in the matroid case yield analogous welfare and revenue guarantees.

Formally, a position auction is an auction in which agents can be allocated one of $m$ positions; each of which is valued by an agent at $\alpha_j \valagent$. In advertising auctions, these are slots on a webpage to fill where lower slots receive fewer clicks. The positions are ordered such that $\{\alpha_j\}$ is decreasing in $j$ (hence slot 1 is best).

In GFP, agents submit bids $\bid_i$, and positions are allocated in order of bid. Each agent pays their bid scaled by the quality of the slot:  $\alpha_j \bid_i$. Equivalently, they pay their bid when they are served, which occurs with probability $\alpha_j$ for position $j$.

While the correspondence between bids and threshold bids is not as immediate in GFP as in the single-item, first-price auction, GFP satisfies a notion of pointwise revenue covering, or revenue covering when other players are playing fixed actions that we use to show that it satisfies general revenue covering. The proof is included in \ref{sec:apprevcovering}. By Lemma~\ref{lem:fpaaddingreserves}, adding reserves preserves revenue covering above the reserves and so Theorem~\ref{thm:poa} and Lemma~\ref{lem:monopreserves} respectively imply welfare and revenue approximations of $2e/(e-1)$ with reserves. 

\begin{theorem}\label{thm:gfp-covered}
The generalized first price (GFP) auction is $1$-revenue covered.
\end{theorem}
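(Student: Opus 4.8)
The plan is to prove $1$-revenue covering for GFP by first establishing a \emph{pointwise} version of the covering inequality --- one that holds when all agents other than a fixed agent $i$ play fixed actions --- and then integrating over the other agents' value distributions. Fix an action profile $\actions$ and a feasible target allocation $\allocalt$; recall that since positions are allocated in decreasing order of bid, the equivalent threshold bid $\ppu(\allocdev)$ for agent $i$ at allocation probability $\allocdev$ is essentially the bid needed to outrank enough competitors to capture a slot serving $i$ with probability at least $\allocdev$, i.e.\ $\ppu(\allocdev)$ is determined by the sorted bids of the other agents together with the click-through rates $\{\alpha_j\}$. First I would write down, for the fixed competing bids, the step function $\ppu(\cdot)$ explicitly: as $\allocdev$ ranges over $(0,1]$, $\ppu(\allocdev)$ takes the value of the $\ell$-th highest competing bid on the interval of $\allocdev$ values for which $i$ must beat exactly $\ell$ others to be guaranteed the relevant slot. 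The cumulative equivalent threshold $\threshexpected{\bidallocagent(\actionagent)}{\allocaltagent}=\int_{\bidallocagent(\actionagent)}^{\allocaltagent}\ppu(\allocdev)\,d\allocdev$ is then a sum of rectangles, each of the form (competing bid)\,$\times$\,(length of a subinterval of allocation probabilities).

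The key combinatorial step is to charge each such rectangle to the payment of the competitor whose bid appears in it, in a way that no competitor is overcharged. Summing the pointwise covering requirement $\sum_i \threshexpected{\bidallocagent(\actionagent)}{\allocaltagent} \le \pay(\actions)$ over a single agent $i$ with the others fixed, I would argue: the total cumulative threshold charged across all $i$ is, by the explicit form above, bounded by $\sum_j \alpha_j \cdot (\text{bid of the agent in slot } j)$ up to the reordering induced by the $\allocalt_i$'s, and the right-hand side is exactly the revenue of GFP on the profile $\actions$ (since an agent in slot $j$ pays $\alpha_j b_i$). The matroid-style exchange argument referenced for Lemma~\ref{lem:pybrevcover} should carry over: the position-auction feasibility structure is a transversal matroid, and the target allocation $\allocalt$, being feasible, can be matched greedily against the sorted bid order so that each unit of $\allocalt$-mass is paired with a distinct competing bid no larger than the one it is charged. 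I would then take expectation over the competing agents' values to pass from the pointwise statement to $\mu \rev(\mech) \ge \sum_i \threshexpected{\bidallocagent(\actionagent)}{\allocaltagent}$ with $\mu = 1$; since the pointwise inequality holds for every realization, the expectation is immediate, and $\rev(\mech)=\Ex{\pay(\actions)}$.

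The main obstacle I anticipate is the bookkeeping in the charging argument: because the equivalent threshold bid $\ppu(\allocdev)$ is a price \emph{per unit of allocation} rather than a raw bid, and because different agents' target allocations $\allocaltagent$ can be fractional and interleaved, one must be careful that the rectangles assigned to distinct agents $i$ do not double-count the same competing bid at the same position. The clean way to handle this is to exploit the scaling by $\{\alpha_j\}$: the "per unit allocation" normalization in $\equivbidact{\actionagent}=\bidpaymentagent(\actionagent)/\bidallocagent(\actionagent)$ and in $\ppu$ precisely cancels the $\alpha_j$ factors, so that integrating $\ppu$ against $d\allocdev$ recovers actual payments rather than scaled ones --- this is the "one extra step" beyond the matroid case alluded to in the text. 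Once that normalization is tracked correctly, the feasibility of $\allocalt$ (equivalently, a valid fractional matching in the position matroid) guarantees an injective assignment of threshold-rectangles to competing payments, and $1$-revenue covering follows. The remaining details --- verifying convexity/monotonicity of the relevant step functions and the measure-zero handling of ties --- are routine and I would relegate them to the appendix, consistent with how Lemma~\ref{lem:pybrevcover} is treated.
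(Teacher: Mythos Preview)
Your high-level two-step plan (pointwise covering, then pass to expectation) matches the paper's structure, but the second step is \emph{not} immediate, and this is where your proposal has a genuine gap. The pointwise inequality bounds, for each realization $\stratothers(\valothers)$, the \emph{ex post} threshold $T_i^{\stratothers(\valothers)}[\,\cdot\,,\allocaltagent]$; taking expectation gives $\Ex[\valothers]{T_i^{\stratothers(\valothers)}[0,\allocaltagent]}$. But Definition~\ref{def:rc} requires you to bound the \emph{interim} threshold $\threshexpected{0}{\allocaltagent}$, defined via $\ppu(\allocdev)=\min\{\equivbidact{a}:\bidallocagent(a)\ge\allocdev\}$ with $\bidallocagent$ the expected allocation. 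These are different objects: the expected pointwise threshold is what it would cost an agent who can tailor her bid to each realization so as to hit allocation exactly $\allocaltagent$ every time, which is not an implementable single action. One can check in small examples that $\Ex[\valothers]{\ppu^{\stratothers(\valothers)}(\allocdev)}$ and $\ppu(\allocdev)$ need not compare pointwise in $\allocdev$, so the step genuinely needs an argument. The paper handles this by invoking convexity of $T_i^{\actionothers}[0,\cdot]$ in its second argument: among all (possibly realization-dependent) ways of achieving expected allocation $\allocaltagent$, the single fixed bid $\thresholdagent(\allocaltagent)$ minimizes expected cost, which yields $\Ex[\valothers]{T_i^{\stratothers(\valothers)}[0,\allocaltagent]}\ge \threshexpected{0}{\allocaltagent}$. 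You should supply this (or an equivalent) argument rather than calling the passage ``immediate.''

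A secondary point: your pointwise step appeals to a transversal-matroid exchange, but position auctions are not a $\{0,1\}$ matroid environment---allocations are the fractional weights $\alpha_j$---so Lemma~\ref{lem:bidsvsthresholds} does not apply as stated. The paper instead argues directly: for any target slot $j$, the cumulative threshold to reach $\alpha_j$ is $\sum_{\ell\ge j}(\alpha_\ell-\alpha_{\ell+1})b^{\ell}\le \alpha_j b^{j}$, which is exactly the payment of the current occupant of slot $j$; summing over the slots used by $\allocalt$ gives pointwise $1$-covering without any matching argument. Your rectangle-charging intuition can be made to work, but the clean bound is this slot-by-slot comparison, not a matroid exchange.
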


\subsection{All-Pay Auctions}
Revenue-covering is not limited to first-price semantics: the all-pay is also revenue covered, with $\mu=2$. The relationship between revenue and thresholds in an all-pay auction is more indirect than for the first price auction. In a first-price auction, if a bidder must bid $\$10$ to attain a $80\%$ probability of winning, it must be because the highest bid from other bidders is greater than $\$10$ $80\%$ of the time. In an all-pay auction however, the bidder would have to always pay $\$10$ instead. Recall that the equivalent bid of an action is the price per unit of allocation, so the equivalent bid of the $\$10$ all-pay bid giving allocation of $80\%$ is $\$10/80\%=\$12.50$ which is more than the competing bids that induced the threshold. 

As the allocation probability from an all-pay bid approaches $1$, the equivalent bid approaches the all-pay bid, and so such bids are almost revenue-covered. For small probabilities of allocation however, the all-pay bid and equivalent bid may be very far apart. The proof uses the monotonicity of the allocation rule to show that the threshold from bids with high probabilities of allocation outweighs the threshold from bids with low probabilities of allocation, giving a bound for the cumulative additional threshold of at most twice the revenue of the auction.

Combining with Theorem~\ref{thm:poa} gives a welfare bound of $3e/(e-1)$, weaker than the bound of $2$ in \citet{ST13}. For revenue, it is not feasible to add reserves to an all-pay auction, but the all-pay auction with at least $k$-bidders from each distribution satisfies $k$-duplicates, and thus by Corollary~\ref{cor:kbiddersrev} the revenue of the all-pay auction with at least $2$ bidders with values from each distribution is a $4e/(e-1)$-approximation to the revenue of the optimal auction.

\begin{lemma}
\label{lem:aprc}
The all-pay matroid auction is 2-revenue covered.
\end{lemma}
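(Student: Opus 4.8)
The plan is to follow the first-price matroid argument (Lemma~\ref{lem:pybrevcover}) and absorb the all-pay versus first-price discrepancy into a single factor of two coming from the gap between an all-pay bid and its equivalent bid.

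\textbf{Reduction to independent sets.} Since each $\ppu(\cdot)$ is nonnegative, $\threshexpected{\bidallocagent(\actionagent)}{\allocaltagent}\le\threshexpected{0}{\allocaltagent}$ for every action $\actionagent$; and since $\threshexpected{0}{\alloclevel}$ is convex in $\alloclevel$ and vanishes at $\alloclevel=0$, $\threshexpected{0}{\allocaltagent}\le\allocaltagent\,\threshexpected{0}{1}$. As any feasible $\allocalt$ in a matroid environment lies in the matroid polytope, the linear functional $\allocalt\mapsto\sum_i\allocaltagent\,\threshexpected{0}{1}$ is maximized at an independent set. Hence it suffices to show, for every independent set $W$,
\begin{equation*}
2\,\rev(\mech)\;\ge\;\sum_{i\in W}\threshexpected{0}{1}.
\end{equation*}

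\textbf{Per-agent bound via the monotonicity trick.} The all-pay matroid auction has the same allocation rule as the first-price matroid auction, so agent $i$'s interim bid allocation is $\bidallocagent(b)=G_i(b):=\Prx[\valothers]{b\ge c_i(\stratothers(\valothers))}$, where $c_i$ is the matroid critical bid given the other agents' bids; but now every agent pays its bid, so the equivalent bid of $b$ is $b/G_i(b)$ and $\ppu(z)=\min_{b\,:\,G_i(b)\ge z}b/G_i(b)\le G_i^{-1}(z)/z$. Because $\ppu$ is nondecreasing, splitting $\int_0^1$ at $z=\tfrac12$ and using $\ppu(z)\le\ppu(\tfrac12)\le\ppu(z')$ for $z\le\tfrac12\le z'$ gives $\int_0^1(1-2z)\ppu(z)\,dz\le 0$, i.e. $\int_0^1\ppu(z)\,dz\le 2\int_0^1 z\,\ppu(z)\,dz$; this is the precise form of ``high-allocation thresholds outweigh low-allocation thresholds.'' Therefore
\begin{equation*}
\threshexpected{0}{1}\;=\;\int_0^1\ppu(z)\,dz\;\le\;2\int_0^1 z\,\ppu(z)\,dz\;\le\;2\int_0^1 z\cdot\frac{G_i^{-1}(z)}{z}\,dz\;=\;2\int_0^1 G_i^{-1}(z)\,dz\;=\;2\,\expect[\valothers]{c_i}.
\end{equation*}

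\textbf{Matroid exchange.} Exactly as in Lemma~\ref{lem:pybrevcover}: for a fixed value profile, if $T$ is the winning set and $W$ an independent set, basis exchange yields an injection $\sigma:W\to T$ with $c_i\le b_{\sigma(i)}$ for each $i\in W$ (take $\sigma(i)=i$ if $i\in T$, and otherwise maximality of $T$'s weight forces $b_i<b_{\sigma(i)}$ while any bid exceeding $b_{\sigma(i)}$ would get $i$ served). Injectivity of $\sigma$ gives $\sum_{i\in W}c_i\le\sum_{i\in W}b_{\sigma(i)}\le\sum_j b_j=\rev(\mech)$ pointwise, and taking expectations $\sum_{i\in W}\expect[\valothers]{c_i}\le\rev(\mech)$. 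Combining with the per-agent bound yields $\sum_{i\in W}\threshexpected{0}{1}\le 2\sum_{i\in W}\expect[\valothers]{c_i}\le 2\,\rev(\mech)$, which is the inequality the reduction requires.

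\textbf{Expected main obstacle.} The delicate step is the matroid exchange: making the critical bid $c_i(\stratothers(\valothers))$ precise (the smallest bid that puts $i$ into the max-weight independent set given the others' bids) and verifying that $\sigma$ can be chosen to witness $c_i\le b_{\sigma(i)}$ while remaining injective, so the sum over $W$ telescopes into revenue with no double-counting; this is where importing the first-price matroid machinery pays off. The factor-two estimate in the second step is clean once the monotonicity trick is spotted, and it is tight (equality when $b/G_i(b)$ is constant in $b$, e.g.\ a uniform critical-bid distribution).
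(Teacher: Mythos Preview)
Your proof is correct and follows essentially the same route as the paper: both arguments (i) reduce via convexity to bounding $\sum_i \allocaltagent\,\threshexpected{0}{1}$, (ii) obtain $\threshexpected{0}{1}\le 2\,\expect{c_i}$ from $z\,\ppu(z)\le G_i^{-1}(z)$ together with the monotonicity of $\ppu$ (your splitting-at-$1/2$ trick is exactly the Chebyshev sum inequality the paper invokes by name), and (iii) cover $\sum_i \allocaltagent\,\expect{c_i}$ by revenue via matroid basis exchange (the paper's Lemma~\ref{lem:bidsvsthresholds}). Your preliminary reduction to extreme independent sets is a harmless presentational variant of the paper's direct treatment of fractional $\allocalt$.
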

We include the proof here for the single-item case; the generalization to matroids is included in Appendix~\ref{sec:apprevcovering}.
\begin{proof}[Proof (single-item).]
We first translate revenue to threshold bids. In expectation, these thresholds 2-approximate the equivalent threshold bids. Combining the two arguments yields the result.

Let $\strat$ be an arbitrary strategy profile, and $\allocalt$ an alternate allocation. First, for any bidder $\agent$, let $\critbid{\valothers}$ be the threshold (all-pay) bid for $\agent$ in realized value profile $\valothers$ under $\strat$. Since the threshold bid corresponds to some other agent's bid, and agents pay their bids regardless of allocation, $\rev(\mech)\geq \expect[\valothers]{\critbid{\valothers}}\geq\expect[\valothers]{\critbid{\valothers}}\allocaltagent.$

To relate threshold bids to equivalent thresholds, let $\actionagent(\allocdev)$ be the $\allocdev$-quantile of $\agent$'s competing bids. That is, $\actionagent(\allocdev)=\argmin_{\actionagent}$ $\bidpaymentagent(\actionagent)$ subject to $\bidallocagent(\actionagent)\geq\allocdev$. By the definition of $\ppu$,
\begin{equation}
\frac{\bidpaymentagent(\actionagent(\allocdev))}{\bidallocagent(\actionagent(\allocdev))}\geq\ppu(\allocdev).\notag
\end{equation}
Rearranging and noting that in an all-pay auction, $\bidpaymentagent(\actionagent(\allocdev))=\actionagent(\allocdev)$, we obtain
\begin{equation}
\label{eq:bvtbound}
\actionagent(\allocdev)\geq\ppu(\allocdev)\bidallocagent(\actionagent(\allocdev))\geq\ppu(\allocdev)\allocdev.
\end{equation}

This yields the following sequence of inequalities:
\begin{equation}
\label{eq:bidtothresh}
\expect[\valothers]{\critbid{\valothers}}=\int_0^1 \actionagent(\allocdev)\, d\allocdev \geq
\int_0^1\ppu(\allocdev)\allocdev\,d\allocdev\geq
 \frac{1}{2} \int_0^1 \ppu(\allocdev)\, d\allocdev=\threshexpected{0}{1},
\end{equation}
where the first equality follows from noting that expected value can be computed by integrating over quantiles, the first inequality from equation \eqref{eq:bvtbound}, and the second inequality from the fact that $\ppu$ is an increasing function and Chebyshev's sum inequality. Finally, since $\allocalt$ is feasible, $\sum\nolimits_\agent \allocaltagent\leq 1$. We can combine this with (\ref{eq:bidtothresh})  to get
\begin{equation}
2\rev(\mech)\geq\sum\nolimits_\agent \threshexpected{0}{1}\allocaltagent.
\end{equation}

By the convexity of $\expthresholdagent$, $\threshexpected{0}{1}\allocaltagent\geq\threshexpected{0}{\allocaltagent}$. Since $\threshexpected{0}{\allocaltagent}\geq\threshexpected{\bidallocagent(\actionagent)}{\allocaltagent}$, we conclude that the all-pay auction is 2-revenue covered.
\end{proof}

\subsection{The Second-Price Auction}
\label{sec:spa}
Not all mechanisms are revenue covered. One such mechanism that lacks a direct relationship between thresholds and revenues is the second-price auction. In the second-price auction, agents submit sealed bids, the highest bidder wins and is charged the second-highest bid. Consider a two-agent setting where bidders have deterministic values $v_1=1$ and $v_2=\epsilon$. Assume agent 1 bids 1 and agent 2 bids $\epsilon$. The revenue is $\epsilon$, but $T_2[0,1]$ is 1, so the second-price auction cannot be revenue covered.
\section{Extension: Simultaneous Composition}
\label{sec:simul}
In this section we prove that if a set of mechanisms satisfy revenue covering when operated in isolation, then they satisfy revenue covering when many instances of the mechanisms are simultaneously being run if agents are unit-demand and single-valued across outcomes of the mechanisms. We formally define simultaneous composition for single-dimensional agents as follows:

\begin{definition}
Let mechanisms $\mech_1,\ldots,\mech_m$ have allocation and payment rules $(\allocitem,\paymentitem)$ and individual action spaces spaces $A_\agent^1,\ldots,A_\agent^m$ for each agent $\agent$. The \emph{simultaneous composition} of $\mech_1,\ldots,\mech_m$ is defined to have:
\begin{itemize}
\item Action space $\prod_\mitem A_\agent^\mitem$ for each agent. That is, each agent participates in the global mechanism by participating in each composed mechanism individually. 
\item Allocation rule $\bidallocagent(\actions)=\left [\bidallocagent^1(\actions^1),\ldots,\bidallocagent^m(\actions^m)\right].$ In other words, the mechanism gives each agent their allocated bundle from each mechanism.
\item Payment rule $\bidpaymentagent(\actions)=\sum_\mitem \bidpaymentagentitem(\actionsitem)$. That is, agents make payments to every composed mechanism.
\end{itemize}
\end{definition}

We assume agent utilities are \emph{unit demand} and \emph{single-valued} over the outcomes of the mechanisms. Agent utilities are then of the form $\valagent \cdot(\max_{\mitem\in\wantsagent}\bidallocagentitem(\actions))-\bidpaymentagent(\actions)$. The \emph{induced single-dimensional allocation rule} is $\simallocagent(\actions)=\max_{\mitem\in\wantsagent}\bidallocagentitem(\actions)$; it serves the same role in our framework as the allocation rule of a single, uncomposed mechanism. We define $\thresholdagent$, $\expthresholdagent$, and revenue covering with respect to $\simallocagent$ for mechanisms defined as in previous sections.

The main theorem of this section is that the simultaneous composition of a number of mechanisms that satisfy the same notion of revenue covering (no reserves, or above the same individual reserves) remains revenue covered in the same manner.

\begin{lemma}
\label{lem:seq}
Let $\mech$ be the simultaneous composition of $\revpar$-revenue covered (with reserves $\reserves$) mechanisms $\mech_1, \ldots \mech_m$ with unit-demand, single-valued agents; then $\mech$ is $\revpar$-revenue covered (with reserves $\reserves$).
\end{lemma}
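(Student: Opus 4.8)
The plan is to verify the revenue-covering inequality of Definition~\ref{def:rc} (and of Definition~\ref{def:revcoverabove} in the reserves case) for $\mech$ directly, by dominating each agent's cumulative equivalent threshold \emph{in the composition} by a sum of cumulative equivalent thresholds \emph{internal to the component mechanisms}, and then invoking revenue covering of each $\mechitem$ separately. The (implicit) composed strategy profile restricts to a single-dimensional strategy profile $\stratitem$ for each $\mechitem$, and since a composed payment is the sum of the component payments, $\rev(\mech)=\sum\nolimits_\mitem\rev(\mechitem)$. One observation to note up front: in Definitions~\ref{def:rc} and~\ref{def:revcoverabove} the revenue term depends only on the strategy profile, whereas the action profile $\actions$ is a free reference profile over which the inequality must hold; this slack is what lets me freely modify $\actions$ in the reserves case.

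The key structural step is the pointwise bound $\ppu(\allocdev)\le\min_{\mitem\in\wantsagent}\ppuitem(\allocdev)$, where $\ppuitem$ (and, below, $\threshexpecteditem{\cdot}{\cdot}$) denotes the equivalent threshold bid (cumulative equivalent threshold) computed \emph{inside} $\mechitem$ with respect to $\bidallocagentitem$, $\bidpaymentagentitem$, and $\stratitem$. It holds because for any $\mitem\in\wantsagent$ and any action $\actionagentitem$ of agent $\agent$ in $\mechitem$, the composed action that plays $\actionagentitem$ in $\mechitem$ and the withdraw action $\withdrawaction$ everywhere else has induced allocation $\bidallocagentitem(\actionagentitem)$, total payment $\bidpaymentagentitem(\actionagentitem)$, and hence equivalent bid $\bidpaymentagentitem(\actionagentitem)/\bidallocagentitem(\actionagentitem)$; minimizing over $\mitem$ and over actions $\actionagentitem$ with $\bidallocagentitem(\actionagentitem)\ge\allocdev$ gives the claim. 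Integrating, and using $\simallocagent(\actionagent)=\max_{\mitem\in\wantsagent}\bidallocagentitem(\actionagentitem)\ge\bidallocagentitem(\actionagentitem)$ together with $\ppuitem\ge0$ to lower the left endpoint, we obtain, for every composed action $\actionagent$, every $\alloclevel\in[0,1]$, and every $\mitem\in\wantsagent$,
\[
\threshexpected{\simallocagent(\actionagent)}{\alloclevel}\ \le\ \threshexpecteditem{\bidallocagentitem(\actionagentitem)}{\alloclevel}.
\]
Next I decompose the target allocation. A feasible allocation $\allocalt$ for the composition is the induced allocation of some profile $(x^\mitem)_\mitem$ with each $x^\mitem$ feasible in $\mechitem$, so (using downward-closure) $\allocaltagent=\max_{\mitem\in\wantsagent}x^\mitem_\agent$, and I assign each agent $\agent$ to some $\mitem(\agent)\in\arg\max_{\mitem\in\wantsagent}x^\mitem_\agent$. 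Applying the displayed inequality to each agent through its component $\mitem(\agent)$ (with $\alloclevel=\allocaltagent=x^{\mitem(\agent)}_\agent$), regrouping by component, and then, inside each $\mechitem$, applying revenue covering of $\mechitem$ to the feasible allocation obtained from $x^\mitem$ by zeroing the coordinates of agents not assigned to $\mitem$ (feasible by downward-closure; their threshold terms then vanish), yields
\[
\sum\nolimits_\agent\threshexpected{\simallocagent(\actionagent)}{\allocaltagent}\ \le\ \sum\nolimits_\mitem\ \sum\nolimits_{\agent:\,\mitem(\agent)=\mitem}\threshexpecteditem{\bidallocagentitem(\actionagentitem)}{x^\mitem_\agent}\ \le\ \sum\nolimits_\mitem\revpar\,\rev(\mechitem)\ =\ \revpar\,\rev(\mech).
\]
This settles the no-reserves case; the reserves case is identical except that every sum over agents is restricted to $\{\agent:\valagent>\reservei\}$.

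The one genuine obstacle is the reserves case: revenue covering of $\mechitem$ \emph{with reserves} requires the component action profile $\actionsitem$ to be participatory for $\mechitem$, and a composed profile participatory for $\mech$ need not be participatory component-by-component --- a unit-demand agent already served in one component may, in another, play an action that underbids its reserve there. I plan to handle this in two moves. First, reduce without loss of generality to \emph{concentrated} profiles, in which each agent plays a non-withdraw action in at most one component: replacing each $\actionagent$ by the action that keeps $\actionagent^{\mitem^\ast}$ in an allocation-maximizing component $\mitem^\ast$ (withdrawing elsewhere) leaves $\simallocagent(\actionagent)$, hence the left-hand side, unchanged, leaves the right-hand side untouched (since $\rev(\mech)$ depends only on $\strat$), and preserves participation for $\mech$ --- when $\simallocagent(\actionagent)>0$ the equivalent bid and allocation are unchanged, and when $\simallocagent(\actionagent)=0$ participation forces that no composed action is profitable, which, since any profitable component action extends to a profitable composed action, makes withdraw participatory in every component. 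Second, for a concentrated profile the action in $\mech_{\mitem^\ast}$ is participatory for $\mech_{\mitem^\ast}$ (its equivalent bid equals the composed equivalent bid $\le\valagent$); in any other component $\mitem$ the agent withdraws, which is participatory for $\mechitem$ unless $\mechitem$ offers that agent a profitable action, in which case I replace the withdraw by the cheapest action of $\mechitem$ attaining some allocation $\delta>0$ --- participatory since its equivalent bid is $\ppuitem(\delta)\le\valagent$ by monotonicity of $\ppuitem$ --- which changes that agent's component-$\mitem$ threshold term by a quantity tending to $0$ as $\delta\to0$ and again does not affect $\rev(\mechitem)$. Taking $\delta\to0$ and summing over components recovers the bound. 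The structural inequality and the feasibility decomposition are routine; this participation bookkeeping is where the real care is needed.
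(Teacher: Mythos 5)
Your no-reserves argument is correct and is, in substance, the paper's own proof: the paper packages the pointwise bound $\ppu(\allocdev)\le\ppuitem(\allocdev)$ as ``revenue covering restricted to the action set $\bigcup_\mitem\actspaceagentitem$'' (participate in one component, withdraw elsewhere) and notes that restricted covering implies unrestricted covering, then uses the same matching of agents to components, the same zeroing-out via downward closure, and the same regrouping $\rev(\mech)=\sum_\mitem\rev(\mechitem)$. Since Definition~\ref{def:rc} quantifies over \emph{all} action profiles, no participation issue arises there, and that half of your proof is complete.

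The reserves case is where I see a genuine gap, in your $\delta\to0$ step. When agent $\agent$ is assigned (by the alternate allocation) to a component $\mitem$ in which she withdraws while concentrating her action in $\mitem^\ast\ne\mitem$, and $\mechitem$ offers her a profitable action, you replace the withdraw by $\effaction^\mitem(\delta)$, the cheapest action with $\bidallocagentitem\ge\delta$. But the allocation $\bidallocagentitem(\effaction^\mitem(\delta))$ of that action need not tend to $0$ with $\delta$: action spaces need not contain participatory actions of arbitrarily small positive allocation (e.g.\ in a first-price component with reserve $\reservei$, the smallest-allocation participatory action is bidding $\reservei$, whose allocation is the fixed probability that $\reservei$ wins). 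Consequently (i) the change in the component threshold term, $\expthreshagentitem[0,\min(\bidallocagentitem(\effaction^\mitem(\delta)),x^\mitem_\agent)]$, need not vanish, and (ii) your first displayed inequality requires the lower limit on the right to be at most $\simallocagent(\actionagent)$, which the replacement's allocation can exceed (her global induced allocation comes from $\mitem^\ast$ and can be tiny while every locally participatory action in $\mitem$ has allocation bounded away from zero). So the chain does not close in this sub-case. To be fair, the paper's own treatment of the reserves case is a single terse paragraph asserting that local cumulative thresholds from participatory actions dominate the global ones, so you have correctly located the delicate point; but as written your limit argument does not resolve it, and a correct proof needs either an additional assumption on the richness of component action spaces or a different accounting of the threshold mass below the smallest locally participatory allocation.
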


The proof is included in \IFECELSE{the full version of this paper}{Appendix~\ref{sec:appsim}}. Combining Lemma~\ref{lem:seq} with Theorem~\ref{thm:poa} and Lemma~\ref{lem:monopreserves} immediate yields welfare and revenue results, for instance:

\begin{corollary}
Let $\mech$ be the simultaneous composition of $k$ first-price auctions with monopoly reserves $\reserves^*$, matroid feasibility constraints, and unit-demand, single-valued agents with regularly distributed valuations. Then, the revenue of $\mech$ is a $\frac{2e}{e-1}-$approximation to the revenue of the optimal global mechanism. 
\end{corollary}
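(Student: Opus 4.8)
The plan is to obtain the corollary by chaining the three black boxes already established, carrying the revenue-covering parameter $\revpar=1$ through all of them, and treating the composed mechanism as a single single-dimensional mechanism via its induced allocation rule $\simallocagent(\actions)=\max_{\mitem\in\wantsagent}\bidallocagentitem(\actions)$. First I would invoke Lemma~\ref{lem:pybrevcover} to conclude that each component first-price matroid auction $\mech_\mitem$ is $1$-revenue covered, and then Lemma~\ref{lem:fpaaddingreserves} to add the bid-space monopoly reserves $\reserves^*$ to each component, so that each $\mech_\mitem$ with reserves is $1$-revenue covered with reserves $\reserves^*$. Since the agents are unit-demand and single-valued across the $k$ components, Lemma~\ref{lem:seq} then applies with $\revpar=1$: the composition $\mech$ is $1$-revenue covered with reserves $\reserves^*$, where $\ppu$, $\expthresholdagent$, and revenue covering are all understood with respect to $\simallocagent$.

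Next I would check that $\mech$ implements the monopoly reserves $\reserves^*$ in the sense of the ``implements reserves'' definition, i.e.\ that $\simallocagent(\valagent)=0$ whenever $\valagent<\reservei^*$ in any BNE. This follows because a bidder with value below her reserve has no bid in $[\reservei^*,\valagent]$ yielding positive utility, so in any best response (in particular any BNE) she bids below $\reservei^*$ in every component, is therefore served with probability $0$ in each $\mech_\mitem$, and hence $\simallocagent=\max_{\mitem\in\wantsagent}\bidallocagentitem=0$. Because each $\valagent$ is drawn from a regular distribution with no point masses, Lemma~\ref{lem:monopreserves} (with $\revpar=1$) then gives that the revenue of $\mech$ in any BNE is a $(\revpar+1)\tfrac{e}{e-1}=\tfrac{2e}{e-1}$-approximation to the revenue of the optimal global mechanism, which is exactly the claimed bound.

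\textbf{Main obstacle.} The delicate point is not any single calculation but the legitimacy of viewing $\mech$ as a one-dimensional mechanism on $\simallocagent$ so that the Myerson machinery (Theorem~\ref{thm:myerson} and Lemma~\ref{lem:vvals}) and hence the revenue benchmark $\rev(\OPTmech)$ are meaningful: this rests on the single-valued assumption, which makes each agent's interim allocation genuinely scalar and monotone in $\valagent$ at BNE, so that $\paymentagent(\valagent)$ and the virtual-value identity go through verbatim with $\allocagent$ replaced by $\simallocagent$. One should also confirm that Lemma~\ref{lem:seq}'s ``with reserves $\reserves$'' conclusion is the same reserve notion consumed by Lemma~\ref{lem:monopreserves} --- it is, since both speak of revenue covering above the same per-agent reserves --- after which everything reduces to bookkeeping with $\revpar=1$.
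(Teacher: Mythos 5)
Your proposal is correct and follows essentially the same route as the paper, which obtains the corollary by chaining Lemma~\ref{lem:pybrevcover}, Lemma~\ref{lem:fpaaddingreserves}, and Lemma~\ref{lem:seq} to get $1$-revenue covering with reserves for the composition, then applying Lemma~\ref{lem:monopreserves} with $\revpar=1$. Your extra checks (that the composition implements the reserves, and that the induced single-dimensional allocation rule $\simallocagent$ makes the Myerson machinery applicable) are exactly the points the paper relies on implicitly via the unit-demand, single-valued assumption.
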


\section{Conclusion}
%what we showed
We have given a framework for proving worst-case approximation results
for welfare and revenue in Bayes-Nash equilibrium. This framework
enabled us to prove both welfare and new revenue approximation results
for non-truthful auctions in asymmetric settings, including first
price and all-pay auctions.

This framework has two distinct parts that isolate the analysis of
Bayes-Nash equilibrium from the analysis of the specific mechanism.
The first part, value covering, depends only on Bayes-Nash equilibrium
and relates an agent's surplus and expected threshold price to her
value. The second, revenue covering, is a property of the mechanism
which must hold for every bid distribution. This framework is
especially helpful when equilibria are hard to characterize or
understand analytically, as is the case with the first-price auction
in asymmetric environments and we expect this framework to aid broadly
in understanding properties of equilibria in auctions well beyond the
confines of classical analyses.

We invoked the characterization of Bayes-Nash equilibrium in a few
specific places in our proofs. For value covering and virtual value
covering, it is only important that an agent be best responding to the
expected actions of other bidders. For the revenue approximation
results, we do rely on the characterization of equilibrium by
\citet{M81} to account for revenue via virtual values. This crucially
allows us to relate the allocation a bidder receives to their
contribution to revenue. Extensions beyond single-parameter,
risk-neutral, private-valued agents will be challenging without a
virtual-value equivalent.

%beyond the easy case
%To extend beyond the single parameter, risk-neutral, private-values setting, a 
%notion of virtual values are needed that at least approximate the
%revenue when taken in expectation across agents. We did not rely
%on many properties of virtual values; notably, just that it
%was bounded by value and equaled revenue in expectation across
%agents. 

\section*{ACKNOWLEDGEMENTS}
We thank Vasilis Syrgkanis for comments on a prior version of this paper for which simultaneous composition did not hold, suggesting study of the simultaneous composition setting and for perspective on price-of-anarchy methodology.
%\end{acknowledgements

\bibliographystyle{acmsmall}

{\small \bibliography{bibs}}

\begin{thebibliography}{}

\bibitem[Ausubel and Milgrom, 2006]{AM06}
Ausubel, L.~M. and Milgrom, P. (2006).
\newblock The lovely but lonely vickrey auction.
\newblock {\em Combinatorial auctions}, pages 17--40.

\bibitem[Borodin and Lucier, 2010]{BL10}
Borodin, A. and Lucier, B. (2010).
\newblock Price of anarchy for greedy auctions.
\newblock In {\em ACM-SIAM Symposium on Discrete Algorithms}, pages 537--553.

\bibitem[Bulow and Klemperer, 1996]{BK96}
Bulow, J. and Klemperer, P. (1996).
\newblock {Auctions Versus Negotiations}.
\newblock {\em The American Economic Review}, 86(1):180--194.

\bibitem[Caragiannis et~al., 2014]{C14}
Caragiannis, I., Kaklamanis, C., Kyropoulou, M., Lucier, B., {Paes Leme}, R.,
  and Tardos, E. (2014).
\newblock {Bounding the inefficiency of outcomes in generalized second price
  auctions}.
\newblock pages 1--45.

\bibitem[Chawla and Hartline, 2013]{CH13}
Chawla, S. and Hartline, J.~D. (2013).
\newblock Auctions with unique equilibria.
\newblock In {\em ACM Conference on Electronic Commerce}, pages 181--196.

\bibitem[Christodoulou et~al., 2013]{CKST13}
Christodoulou, G., Kov{\'a}cs, A., Sgouritsa, A., and Tang, B. (2013).
\newblock Tight bounds for the price of anarchy of simultaneous first price
  auctions.
\newblock {\em arXiv preprint arXiv:1312.2371}.

\bibitem[Dhangwatnotai et~al., 2010]{DRY10}
Dhangwatnotai, P., Roughgarden, T., and Yan, Q. (2010).
\newblock Revenue maximization with a single sample.
\newblock In {\em ACM Conference on Electronic Commerce}, pages 129--138.

\bibitem[D\"utting and Kesselheim, 2015]{DK15}
D\"utting, P. and Kesselheim, T. (2015).
\newblock Algorithms against anarchy: Understanding non-truthful mechanisms.
\newblock In {\em 16th ACM Conference on Economics and Computation}.

\bibitem[Hartline and Roughgarden, 2009]{HR09}
Hartline, J.~D. and Roughgarden, T. (2009).
\newblock Simple versus optimal mechanisms.
\newblock In {\em ACM Conference on Electronic Commerce}, pages 225--234.

\bibitem[Hoy et~al., 2015]{HNS15}
Hoy, D., Nekipelov, D., and Syrgkanis, V. (2015).
\newblock Robust data-driven efficiency guarantees in auctions.
\newblock In {\em EC Workshop on Algorithmic Game Theory and Data Science}.

\bibitem[Kaplan and Zamir, 2012]{KZ12}
Kaplan, T.~R. and Zamir, S. (2012).
\newblock Asymmetric first-price auctions with uniform distributions: analytic
  solutions to the general case.
\newblock {\em Economic Theory}, 50(2):269--302.

\bibitem[Kirkegaard, 2009]{K09}
Kirkegaard, R. (2009).
\newblock {Asymmetric first price auctions}.
\newblock {\em Journal of Economic Theory}, 144(4):1617--1635.

\bibitem[Kirkegaard, 2012]{K12a}
Kirkegaard, R. (2012).
\newblock {A Mechanism Design Approach to Ranking Asymmetric Auctions}.
\newblock {\em Econometrica}, 80(5):2349--2364.

\bibitem[Krishna, 2009]{Krishna09}
Krishna, V. (2009).
\newblock {\em Auction Theory}.
\newblock Academic Press/Elsevier.

\bibitem[Lebrun, 2006]{L06}
Lebrun, B. (2006).
\newblock {Uniqueness of the equilibrium in first-price auctions}.
\newblock {\em Games and Economic Behavior}, 55(1):131--151.

\bibitem[Lehmann and Shoham, 2002]{LOS02}
Lehmann, Daniel, I. O.~L. and Shoham, Y. (2002).
\newblock {Truth Revelation in Approximately Efficient Combinatorial Auctions}.
\newblock {\em Journal of the ACM}, 49(5):577--602.

\bibitem[Maskin and Riley, 2003]{MR03}
Maskin, E. and Riley, J. (2003).
\newblock Uniqueness of equilibrium in sealed high-bid auctions.
\newblock {\em Games and Economic Behavior}, 45(2):395 -- 409.

\bibitem[Myerson, 1981]{M81}
Myerson, R. (1981).
\newblock Optimal auction design.
\newblock {\em Mathematics of Operations Research}, 6(1):58--73.

\bibitem[Roughgarden, 2009]{R09}
Roughgarden, T. (2009).
\newblock Intrinsic robustness of the price of anarchy.
\newblock In {\em ACM Symposium on Theory of Computing}, pages 513--522.

\bibitem[Roughgarden et~al., 2012]{RTY12}
Roughgarden, T., Talgam-Cohen, I., and Yan, Q. (2012).
\newblock Supply-limiting mechanisms.
\newblock In {\em ACM Conference on Electronic Commerce}, pages 844--861.

\bibitem[Syrgkanis and Tardos, 2013]{ST13}
Syrgkanis, V. and Tardos, E. (2013).
\newblock Composable and efficient mechanisms.
\newblock In {\em ACM Symposium on Theory of Computing}, pages 211--220.

\bibitem[Vickrey, 1961]{V61}
Vickrey, W. (1961).
\newblock Counterspeculation, auctions, and competitive sealed tenders.
\newblock {\em The Journal of finance}, 16(1):8--37.

\end{thebibliography}

\appendix
\section{First-Price Welfare Approximation Lower Bound}\label{sec:app-example}

\newcommand{\actdensity}{g}
\newcommand{\utilhigh}{\util_H}

In this appendix, we describe an equilibrium of the single-item first-price auction with independent but non-identically distributed values in which the equilibrium welfare is less than that of the optimal welfare by a factor of approximately $1.15$. Our example will have $\numagents+1$ bidders. Bidders $1,\ldots, \numagents$ will be designated \emph{low-valued bidders}, with an identical value distribution to be determined shortly. Bidder $\numagents+1$ will be the \emph{high-valued bidder}, with value deterministically $1$. Allocating to bidder $\numagents+1$ in all value profiles yields a lower bound on the optimal welfare of $1$. Our constructed allocation will misallocate to low-valued bidders, yielding an expected welfare of approximately $.869$.

We will design the bid distribution of the low-valued bidders to make the high-valued bidder indifferent over an interval of bids. This will allow us to select a mixed strategy for the high-valued bidder supported on this interval. To do so, fix in advance the expected utility $\utilhigh\in[0,1]$ of the high-valued bidder. The utility $\utilhigh$ will be a parameter which  defines a family of examples constructed as below. Let $\actiondist_L$ denote the CDF of the bid distribution of an individual low-valued bidder. Then the CDF of the distribution of the highest-bidding low-valued bidder is $\actiondist_L^\numagents$. Note that if $\actiondist_L^\numagents(\action)=\utilhigh/(1-\action)$, then any bid $\action\in[0,1-\utilhigh]$ for the high-valued bidder yields an expected utility of exactly $\utilhigh$ (breaking ties in favor of bidder $\numagents+1$). We will therefore take $\actiondist_L(\action)=(\utilhigh/(1-\action))^{1/\numagents}$.

We have not yet derived a value distribution for the low-valued bidders, and we have not derived a bid distribution for the high-valued bidder. Given a bid distribution $\actiondist_H$ for the high-valued bidder, the value distribution for the low-valued bidders can be derived from first-order conditions. In other words, for any individual low-valued bidder $\agent\in 1,\ldots, \numagents$, bidder $\agent$ is facing the distribution of highest competing bid given by $\actiondist_C(\action)=\actiondist_L^{\numagents-1}(\action)\actiondist_H(\action)$. Bidder $\agent$ bids to maximize $(\valagent-\action)\actiondist_C(\action)$. For any $\actionagent\in(0,1-\utilhigh)$, first-order conditions imply that $\valagent=\actionagent+\actiondist_C(\actionagent)/\actdensity_C(\actionagent)$, where $\actdensity_C(\action)=\actiondist_C'(\action)$ is the density of bidder $\agent$'s competing bid distribution at $\action$. This mapping immediately implies a value distribution for the low-valued bidders.

All that remains is to select a mixed strategy for the high-valued bidder, their expected utility parameter, $\utilhigh$, and a number of low-valued bidders $\numagents$. To produce an equilibrium with low welfare, we must navigate a tradeoff. If $\actiondist_H$ is too aggressive, then the high-valued bidder will win frequently, yielding high welfare. If $\actiondist_H$ is too weak, then noting the formula for the low-valued bidders' values, we see that these values will generally be high. A similar tradeoff applies in selecting $\utilhigh$. Numerical experimentation shows that choosing $\actiondist_H(\action)=\sqrt{\action/(1-\utilhigh)}$ and $\utilhigh=.57$ yields low welfare. Given these choices, one can compute the expected welfare in equilibrium as approximately $.869$ for very large $n$. As mentioned, allocating the high-valued bidder yields a lower bound of $1$ on the optimal social welfare. This implies the desired approximation ratio of $1.15$.

\end{document}